\newcolumntype{x}[1]{%
>{\centering\hspace{0pt}}p{#1}}%
\newcommand{\SortAt}[1]{}
\newcommand*\circled[1]{\tikz[baseline=(char.base)]{\node[shape=circle,draw,inner sep=0.5pt] (char) {#1};}}
\theoremstyle{plain}
\newtheorem{theorem}{Theorem}
\newtheorem{lemma}{Lemma}
\newtheorem{corollary}{Corollary}
\newtheorem{remark}{Remark}
\newtheorem{definition}{Definition}
\newcommand{\figref}[1]{Figure~\ref{#1}}
\newcommand{\tabref}[1]{Table~\ref{#1}}
\newcommand{\secref}[1]{Section~\ref{#1}}
\newcommand{\thmref}[1]{Theorem~\ref{#1}}
\newcommand{\lemref}[1]{Lemma~\ref{#1}}
\newcommand{\corref}[1]{Corollary~\ref{#1}}
\newcommand{\procref}[1]{Procedure~\ref{#1}}
\newcommand{\algoref}[1]{Algorithm~\ref{#1}}
\newtheoremstyle{custom}
{\topsep}   
{\topsep}   
{\small}  
{0pt}       
{\normalsize\bfseries} 
{}          
{3pt plus 1pt minus 1pt} 
{\thmname{#1}\thmnumber{ #2}.\thmnote{ (#3)}} 
\theoremstyle{custom}
\newtheorem{mytheo}{Procedure}
\newenvironment{procedure}
{%
	\Needspace*{1\baselineskip}%
	\vspace{-1.5mm}\par\noindent\hrulefill \vspace{-2mm}\begin{mytheo}}
	{\vspace{-2mm} \par\noindent\hrulefill\vspace{-1.5mm}
\end{mytheo}} 
\newtheorem{mytheo2}{Algorithm}
\newenvironment{algorithm}
{%
	\Needspace*{1\baselineskip}%
	\vspace{-1.5mm}\par\noindent\hrulefill \vspace{-2mm}\begin{mytheo2}}
	{\vspace{-2mm} \par\noindent\hrulefill\vspace{-1.5mm}
\end{mytheo2}}     
\title{Multivariable Iterative Learning Control Design Procedures: From Decentralized to Centralized, Illustrated on an Industrial Printer}
\author{Lennart Blanken, Tom Oomen%
\thanks{The authors are with the Department of Mechanical Engineering, Eindhoven University of Technology, 5600 MB Eindhoven, The Netherlands (e-mail: l.l.g.blanken@tue.nl; t.a.e.oomen@tue.nl). This work is supported in part by Oc\'e Technologies, and in part by the Netherlands Organisation for Scientific Research (NWO) through research programme VIDI under project 15698.}%
}
\begin{document}
\graphicspath{{figures/}}

\maketitle
\thispagestyle{empty}
\pagestyle{empty}


\begin{abstract}
Iterative Learning Control (ILC) enables high control performance through learning from measured data, using only limited model knowledge in the form of a nominal parametric model to guarantee convergence.
The aim of this paper is to develop a range of approaches for multivariable ILC, where specific attention is given to addressing interaction. The proposed methods either address the interaction in the nominal model, or as uncertainty, i.e., through robust stability. The result is a range of techniques, including the use of the structured singular value (SSV) and Gershgorin bounds, that provide a different trade-off between modeling requirements, i.e., modeling effort and cost, and achievable performance.
This allows an appropriate choice in view of modeling budget and performance requirements.
The trade-off is demonstrated in a case study on an industrial printer.

\end{abstract}


\section{Introduction}\label{MIMO_ILC:sec:introduction}

Iterative Learning Control (ILC) can significantly improve the control performance of systems that perform repeating tasks. After each repetition, or trial, the control action is improved by learning from past trials using an approximate model of the system.
Many successful applications have been reported, including additive manufacturing \cite{Barton2011}, microscopic imaging \cite{Clayton2009}, printing systems \cite{Zundert2016_LQ}, and wafer stages \cite{DeRoover2000}.

The observation that many ILC applications are inherently multivariable has lead to developments of ILC theory for multivariable systems.
Most design algorithms for multivariable ILC have been developed in the so-called lifted or supervector framework \cite{Moore1993}, where the ILC controller follows from a norm-optimization problem over a finite-time horizon, see, e.g., \cite{Owens2013}.

Robust convergence properties of ILC algorithms, i.e., robust stability in trial-domain, are crucial to deal with modeling errors.
Optimization-based algorithms have been further extended to address robust stability in, e.g., \cite{DeRoover2000,Li2016,Son2015,Owens2016,Janssens2013}. These approaches rely on a detailed specification of the nominal model and its uncertainty in a certain prespecified form. Despite being very systematic, this imposes a large burden on the model requirements, since modeling of uncertainty often requires substantial effort of the user \cite{Hjalmarsson2005}. Alternatively, fully data-driven ILC algorithms have been developed in, e.g., \cite{BolderKleOomen2018}, but these require a high experimental cost.

Although robust multivariable ILC has been significantly developed, especially from a theoretical perspective, these approaches are often not employed due to high requirements on uncertainty modeling.
The aim of the present paper is to develop a range of user-friendly multivariable ILC design approaches.
Indeed, in many applications, ILC controllers are designed in the frequency-domain \cite{Moore1993}.
Compared to the norm-optimal framework, this enables a systematic and inexpensive robust design in the sense of modeling requirements, especially regarding the uncertainty. Accurate and inexpensive frequency response function (FRF) measurements \cite{Pintelon2012} can be employed to model the uncertainty, see \cite{DeRoover2000,Strijbosch2018}. In addition, frequency-domain design allows for manual loop-shaping, which is often preferred by control engineers.
However, since such design approaches are mainly single-input single-output (SISO), design for multiple-input multiple-output (MIMO) systems typically involves their application to multiple SISO loops, see, e.g., \cite{Moore1993,Wallen2008}. Interaction is typically ignored, which can lead to stability issues, i.e., non-converging algorithms.
This is especially crucial for ILC, since its control action is effective up to the Nyquist frequency \cite{Paszke2013}.

The seemingly drastical increase in required modeling effort to enforce robust convergence of multivariable ILC algorithms must be justified by the imposed performance requirements.
Interestingly, interaction is typically addressed through full MIMO, or centralized, ILC design. Successful MIMO design approaches include $\mathcal{H}_\infty$ synthesis \cite{DeRoover2000,Zundert2018Mech}, the more restricted class of P-type ILC \cite{Fang1998}, and gradient-based algorithms for point-to-point tracking \cite{Dinh2014}.
Centralized techniques enable robust convergence and superior performance, yet require a MIMO parametric model of the system, including interaction. These models can be difficult and expensive to obtain, especially for lightly damped mechatronic systems due to complex dynamics and numerical issues \cite{Oomen2018}.

The main contribution of this paper is a systematic design framework for analysis and synthesis of multivariable ILC, that explicitly addresses the design trade-offs between modeling and performance requirements. The proposed solutions, which form subcontributions, range from decentralized to centralized designs, with various levels of modeling requirements.
The decentralized designs build on results in, e.g., \cite{Grosdidier1986} to guarantee robust convergence, including the use of the structured singular value \cite[Chapter 11]{Zhou1996}, and require limited user effort using only SISO parametric models.
The effectiveness of the framework is demonstrated in a case study on an industrial flatbed printer.
The paper extends preliminary results in \cite{Blanken2016CDC,Blanken2016Mech} through the design framework, new technical results, detailed proofs, and application results.

\textit{Notation.} The imaginary unit is denoted $\iota$, i.e., $\iota^2=-1$.

\section{Problem Formulation}\label{MIMO_ILC:sec:problem_formulation}

\subsection{ILC Setup}\label{MIMO_ILC:subsec:ILC_setup}
\begin{figure}[tpb]
	\centering
	\mbox{\includegraphics[scale=0.85,page=2]{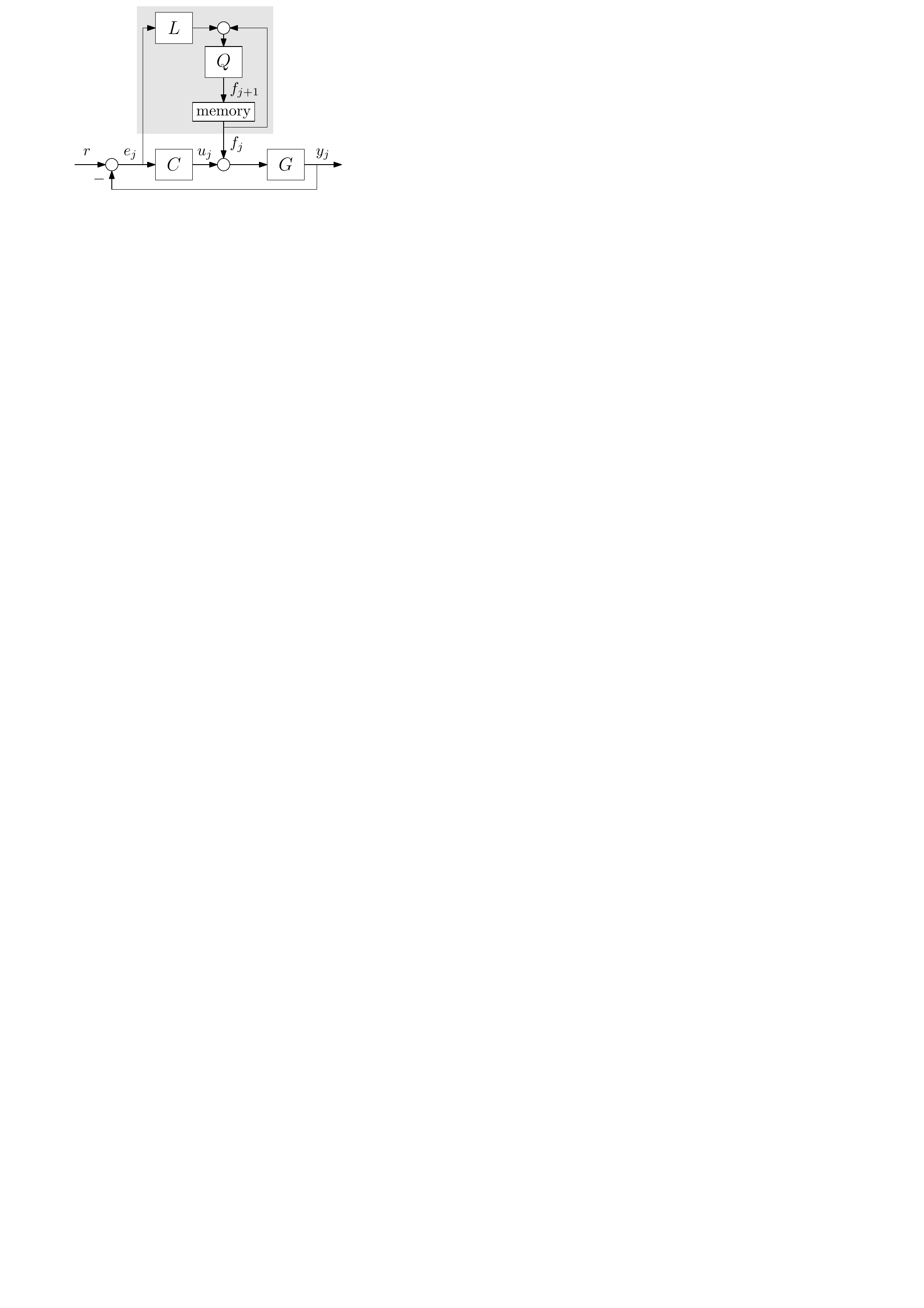}}
	\caption{ILC control configuration.}
	\label{MIMO_ILC:fig:block_scheme_cloop}
\end{figure}
Consider the control configuration in \figref{MIMO_ILC:fig:block_scheme_cloop}, consisting of possibly non-square plant $G\in\mathcal{R}^{n_y\times n_u}(z)$ and internally stabilizing feedback controller $C\in\mathcal{R}^{n_u\times n_y}(z)$.
The disturbance $r\in\ell_2$ is trial-invariant, where each repetition, or trial, is denoted by index $j\in\mathbb{N}_{\geq0}$.
The aim is to minimize the tracking error $e$ in the presence of $r$. Note that trial-varying disturbances are tacitly omitted, see, e.g., \cite{Gunnarsson2006,Oomen2017Mech} for details. The output in trial $j$ is denoted $y_j$, the feedforward by $f_j$, and
\begin{equation}\label{MIMO_ILC:eq:ILC_error}
e_j = Sr-J f_j,
\end{equation}
with sensitivity function $S=(I+GC)^{-1}\in\mathcal{RH}_\infty^{n_y\times n_y}$, and process sensitivity function $J=SG\in\mathcal{RH}_\infty^{n_y\times n_u}$.
Zero initial conditions are assumed without loss of generality \cite{Moore1993}. If $G$ is stable, then $C=0$ is admissible such that $S=I$ and $J=G$.

The objective of ILC is to improve control performance in the next trial $j+1$ by selecting the command input $f_{j+1}$. Typically, an algorithm of the following form is invoked:
\begin{equation}\label{MIMO_ILC:eq:ILC_update}
f_{j+1}= Q ( f_j + L e_j ),
\end{equation}
where $L\in\mathcal{RL}_\infty^{n_u\times n_y}$ and $Q\in\mathcal{RL}_\infty^{n_u\times n_u}$ are to be designed. Notice $L$, $Q$ can be non-causal, since \eqref{MIMO_ILC:eq:ILC_update} is computed off-line.

\subsection{ILC Design for SISO Systems}\label{MIMO_ILC:subsec:SISO_design_procedure}
For the case $n_u=n_y=1$, design procedures are well developed. Often a two-step approach is used, see, e.g., \cite{Strijbosch2018}.
\begin{procedure}\label{MIMO_ILC:proc_SISO_ILC}
	{\normalsize Frequency-domain SISO ILC design}\vspace{-2mm}
	\par\noindent\hrulefill
	\begin{enumerate}
		\item Choose $L(z)$ such that $L(e^{\iota\omega})J(e^{\iota\omega})\approx1$, $\omega\in[0,\omega_c]$.
		This step requires a parametric model of $J(z)$.
		\item For robust stability, $Q(z)$ is selected as a low-pass filter with cut-off frequency near $\omega_c$, such that $Q(e^{\iota\omega})\approx0$, $\forall \omega>\omega_c$.
		This can be performed using nonparametric models of $J(e^{\iota\omega})$.
	\end{enumerate}
\end{procedure}
\procref{MIMO_ILC:proc_SISO_ILC} requires limited model knowledge, since robust stability can be guaranteed through FRFs, see, e.g., \cite{Pintelon2012}, which are for mechatronic systems often accurate and fast to obtain.

A naive extension of \procref{MIMO_ILC:proc_SISO_ILC} to the multivariable case could be to implement multiple SISO ILC.
In this paper, it is demonstrated that this can lead to non-convergent algorithms.

\subsection{Problem Formulation and Contributions}
The problem considered in this paper is the design of multivariable filters $L(z)$ and $Q(z)$ in \eqref{MIMO_ILC:eq:ILC_update} in the frequency domain with respect to the following requirements:
\begin{enumerate}[leftmargin=27pt]
	\renewcommand{\labelenumi}{\theenumi)}
	\renewcommand{\theenumi}{R\arabic{enumi}}
	\item \label{MIMO_ILC:R1} Robust convergence of \eqref{MIMO_ILC:eq:ILC_update}, i.e., stability in trial domain;
	\item \label{MIMO_ILC:R2} High control performance, i.e., a small error $e_j$;
	\item \label{MIMO_ILC:R3} Limited required user effort.
\end{enumerate}
The term \textit{user effort} relates to design tools and required models, i.e., parametric vs. nonparametric, and SISO vs. MIMO.

The main contribution is the development of a step-by-step design procedure for multivariable iterative learning control that addresses modeling and robustness aspects. The proposed design techniques vary in sophistication, and range from
\begin{itemize}
	\item decentralized designs, using SISO parametric models, to
	\item centralized designs, requiring MIMO parametric models,
\end{itemize}
where in all cases, robustness to modeling errors is addressed using nonparametric FRF measurements.
The procedure generalizes \procref{MIMO_ILC:proc_SISO_ILC} to the MIMO case, and provides a coherent overview of available approaches, such that a well-motivated choice can be made for the problem at hand in view of \ref{MIMO_ILC:R1}-\ref{MIMO_ILC:R3}.

\subsection{Overview of Design Framework and Outline of Paper}
The present paper addresses theoretical, design, and algorithmic aspects to obtain a practically implementable design framework for MIMO ILC.
The framework connects all design approaches, see \figref{MIMO_ILC:fig:overview_approaches}, and is summarized next.
\begin{figure}[tpb]
		\centering
		\mbox{\includegraphics[scale=0.85,page=2]{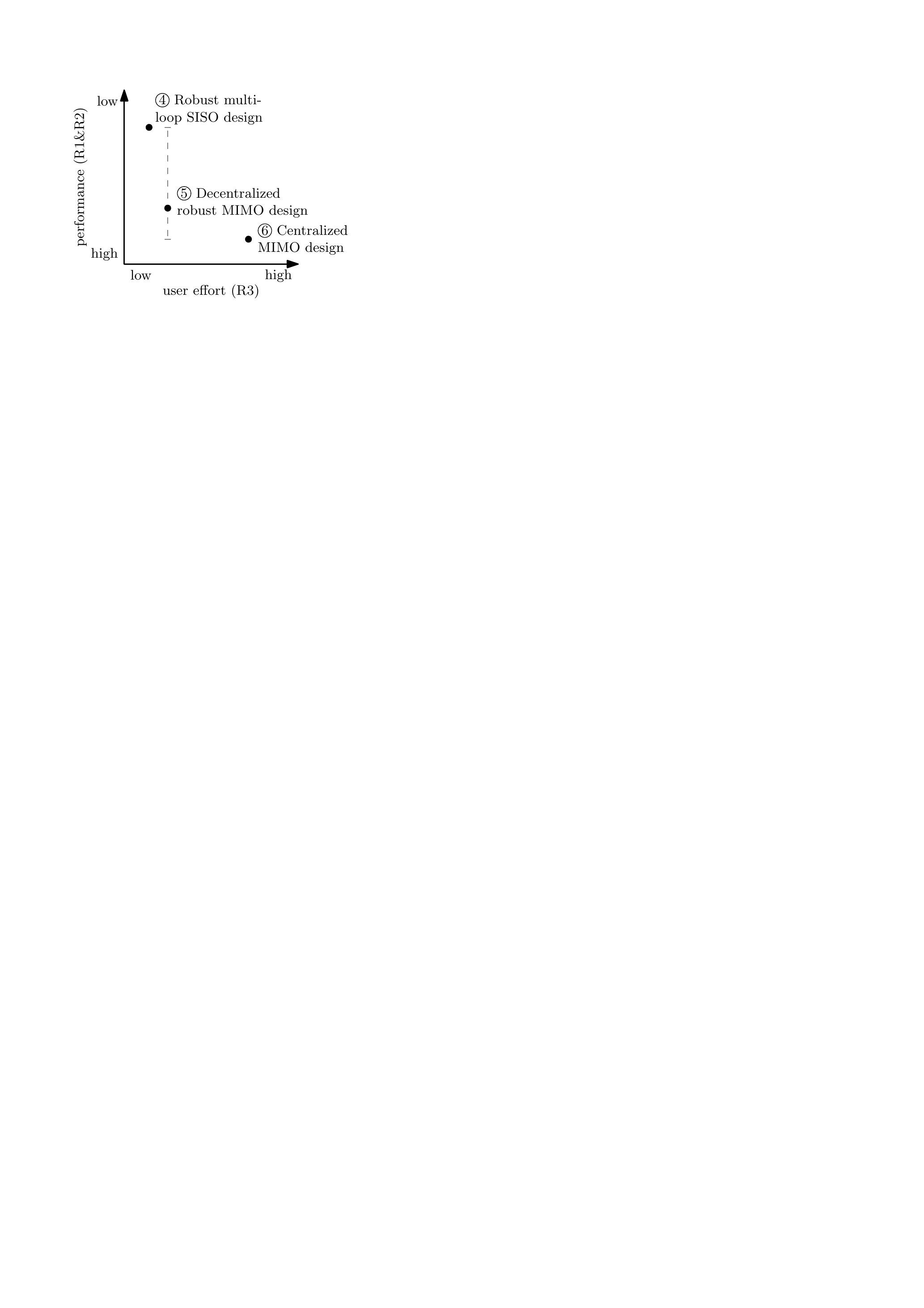}}
		\caption{Schematic overview of approaches in the design framework, illustrating the trade-offs between performance (\ref{MIMO_ILC:R1} and \ref{MIMO_ILC:R2}), and user effort (\ref{MIMO_ILC:R3}). Depending on the level of interaction, the vertical position of \protect\circled{5} may vary.}
		\label{MIMO_ILC:fig:overview_approaches}
\end{figure}

\begin{procedure}\label{MIMO_ILC:proc_MIMO_ILC}
	{\normalsize Frequency-domain MIMO ILC design}\vspace{-2mm}
	\par\noindent\hrulefill
	\begin{enumerate}
		[label=\protect\circled{\arabic*}]
		\item \underline{Non-parametric modeling}.
		\begin{itemize}
			\item Identify MIMO FRF model $\hat{J}_\textrm{FRF}(e^{\iota\omega})$ of $J$, see, e.g., \cite{Pintelon2012}.
		\end{itemize} 
		\item \underline{Interaction analysis}. Decoupled?
		\begin{itemize}
			\item If yes: independent SISO design (\procref{MIMO_ILC:proc_SISO_ILC}).
		\end{itemize}
		\item \underline{Decoupling transformations}. Decoupled?
		\begin{itemize}
			\item If yes: independent SISO design (\procref{MIMO_ILC:proc_SISO_ILC}).
		\end{itemize}	
		\item \underline{Robust multi-loop SISO design} (\secref{MIMO_ILC:sec:multi-loop_SISO}).
		\begin{enumerate}[label=\roman*)]
			\item Obtain SISO parametric models $\widehat{J_{ii}}(z)$;
			\item Robust multi-loop SISO design of $L(z)=\mathrm{diag}\{L_{ii}(z)\}$ and $Q(z)$ using $\widehat{J_{ii}}(z)$ and FRF model $\hat{J}_\textrm{FRF}(e^{\iota\omega})$ (\algoref{MIMO_ILC:algo_robust_multi-loop_ILC}).
		\end{enumerate}
		Performance not satisfactory? Proceed to next step.
		\item \underline{Decentralized robust MIMO design} (\secref{MIMO_ILC:sec:decentralized}).
		\begin{enumerate}[label=\roman*)]
			\item Decentralized design of $L(z)$ and $Q(z)$ for robustness to deliberately ignored interaction and modeling errors, using $\widehat{J_{ii}}(z)$ and FRF model $\hat{J}_\textrm{FRF}(e^{\iota\omega})$ (\algoref{MIMO_ILC:algo_decentralized_ILC}).
		\end{enumerate}
		Performance not satisfactory? Proceed to next step.
		\item \underline{Centralized MIMO design} (\secref{MIMO_ILC:sec:centralized}).
		\begin{enumerate}[label=\roman*)]
			\item Obtain MIMO parametric model $\hat{J}(z)$, including interaction;
			\item MIMO design of $L(z)$ and $Q(z)$ for robustness to modeling errors, using $\hat{J}(z)$ and FRF model $\hat{J}_\textrm{FRF}(e^{\iota\omega})$ (\algoref{MIMO_ILC:algo_centralized_ILC}).
		\end{enumerate}
	\end{enumerate}
\end{procedure}
The key point is that modeling requirements should only be increased if justified by performance requirements. Indeed, $\circled{2}$ to $\circled{5}$ require only SISO parametric models and an FRF measurement, and may yield satisfactory performance, while \circled{6} requires a costly MIMO parametric model.

\begin{remark}
	Plant uncertainty can directly be addressed in \procref{MIMO_ILC:proc_MIMO_ILC} through confidence intervals of FRF estimates.
\end{remark}

The outline of the paper is as follows.
First, the design problem is analyzed.
Then, in Sections \ref{MIMO_ILC:sec:multi-loop_SISO} to \ref{MIMO_ILC:sec:centralized}, the design techniques are developed that constitute steps \circled{4} to \circled{6}.
In \secref{MIMO_ILC:sec:case_study}, \procref{MIMO_ILC:proc_MIMO_ILC} is applied to a multivariable case study. A preview on the results is presented in \figref{MIMO_ILC:fig:norm_error}, illustrating the trade-offs between approaches in \procref{MIMO_ILC:proc_MIMO_ILC}.

\begin{figure}[tpb]
	\centering
	\mbox{\includegraphics{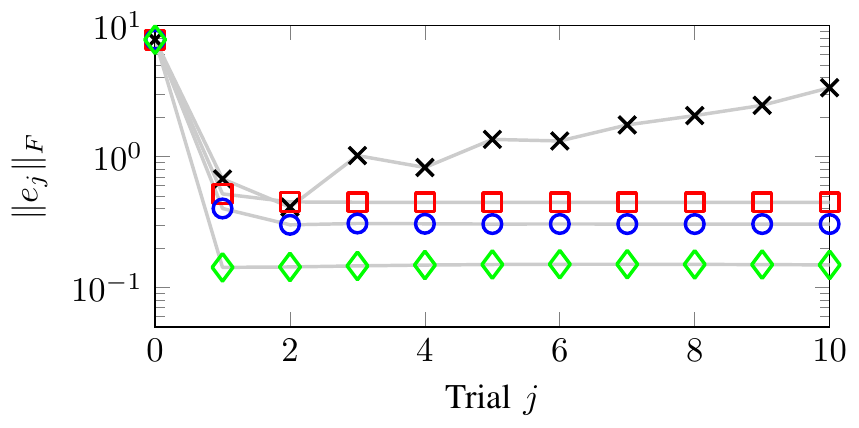}}
	\caption{Simulation results: often-used multi-loop SISO ILC design (\protect\tikz[baseline=-0.6ex,x=1pt,y=1pt]{\protect\draw[black,thick] [-] (0,-3) -- (6,3);\protect\draw[black,thick] [-] (0,3) -- (6,-3);}) can lead to non-convergent algorithms. Through the developed design approaches in steps {\protect\circled{4}} (\protect\tikz[baseline=-0.6ex,x=1pt,y=1pt]{\protect\draw[red,thick] (0,-3) rectangle ++(6,6);}), {\protect\circled{5}} (\protect\tikz[baseline=-0.6ex,x=1pt,y=1pt]{\protect\draw[blue,thick] (7,0) circle (3);}), and {\protect\circled{6}} (\protect\tikz[baseline=-0.6ex,x=1pt,y=1pt]{\protect\draw[green,thick] (0,0) +(3,0) -- +(0,3) -- +(-3,0) -- +(0,-3) -- cycle;}), a well-motivated balance can be made between achievable performance, i.e., norm of the asymptotic error $\|e_\infty\|_F$, and the associated user effort in terms of modeling cost and design complexity.}
	\label{MIMO_ILC:fig:norm_error}
\end{figure}

\section{Analysis of ILC Design Problem}\label{MIMO_ILC:sec:analysis_ILC}
In this section, the general ILC algorithm \eqref{MIMO_ILC:eq:ILC_update} is analyzed, and robust convergence and control performance are defined.

\subsection{Convergence and Performance}\label{MIMO_ILC:subsec:convergence_analysis}
Combining \eqref{MIMO_ILC:eq:ILC_error} and \eqref{MIMO_ILC:eq:ILC_update} yields the linear iterative systems that describe the propagation of $e_j$ and $f_j$ in the trial domain:
\begin{align}
f_{j+1} &= Q(I-LJ)f_j + QLSr,\label{MIMO_ILC:eq:ILC_feedforward_propagation}\\
e_{j+1} &= JQ(I-LJ)J_l^{-1}e_j + (I-JQJ_l^{-1})Sr.\label{MIMO_ILC:eq:ILC_error_propagation}
\end{align}
where \eqref{MIMO_ILC:eq:ILC_error_propagation} holds if a left inverse $J_l^{-1}$ exists such that $J_l^{-1}J=I$, i.e., at least $n_y\geq n_u$.
Convergence is formalized next.
\begin{definition}\label{MIMO_ILC:def:convergence}
	System \eqref{MIMO_ILC:eq:ILC_feedforward_propagation} is convergent iff for all $r,f_0\in\ell_2$, there exists an asymptotic signal $f_\infty\in\ell_2$ such that 
	\begin{equation}
	\limsup_{j\rightarrow\infty}\|f_\infty-f_j\|=0.
	\end{equation}
\end{definition}
Then, the asymptotic signals $f_\infty$ and $e_\infty$ are obtained as
\begin{align}
f_\infty &= \lim_{j\rightarrow\infty}f_j = \left(I-Q(I-LJ)\right)^{-1}QLSr ,\label{MIMO_ILC:eq:fixed_point_ff}\\
e_\infty &= \lim_{j\rightarrow\infty}e_j = \left(I-J\left(I-Q(I-LJ)\right)^{-1}QL\right)Sr.\label{MIMO_ILC:eq:fixed_point_error}
\end{align}
A condition for convergence of \eqref{MIMO_ILC:eq:ILC_feedforward_propagation} and \eqref{MIMO_ILC:eq:ILC_error_propagation} is given next.
\begin{theorem}\label{MIMO_ILC:thm:conv_ILC}
	Iterations \eqref{MIMO_ILC:eq:ILC_feedforward_propagation}, \eqref{MIMO_ILC:eq:ILC_error_propagation} converge to $f_\infty$ and $e_\infty$ iff
	\begin{equation}\label{MIMO_ILC:eq:conv_cond}
	\rho\left(Q(e^{\iota\omega})(I-L(e^{\iota\omega})J(e^{\iota\omega})\right) < 1 \quad \forall\omega\in[0,\pi],
	\end{equation}
	where $\rho(\cdot)$ denotes spectral radius, i.e., $\rho(\cdot)=\max_i|\lambda_i(\cdot)|$.
\end{theorem}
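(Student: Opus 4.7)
The plan is to transform \eqref{MIMO_ILC:eq:ILC_feedforward_propagation} into the frequency domain and characterize convergence by the spectral radius of the trial-domain map at every frequency. Define the iteration matrix $M(z):=Q(z)\bigl(I-L(z)J(z)\bigr)$ and the driving signal $v:=QLSr\in\ell_2$. Applying the DTFT to \eqref{MIMO_ILC:eq:ILC_feedforward_propagation} yields, pointwise in $\omega$,
\begin{equation*}
F_{j+1}(e^{\iota\omega})=M(e^{\iota\omega})F_j(e^{\iota\omega})+V(e^{\iota\omega}).
\end{equation*}
Under \eqref{MIMO_ILC:eq:conv_cond} the matrix $I-M(e^{\iota\omega})$ is invertible for every $\omega$, so a candidate fixed point $F_\infty=(I-M)^{-1}V$ is well defined and coincides with \eqref{MIMO_ILC:eq:fixed_point_ff}. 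Setting $\tilde f_j:=f_j-f_\infty$, the error iteration reduces to $\tilde f_{j+1}=M\tilde f_j$, i.e., $\tilde F_j(e^{\iota\omega})=M(e^{\iota\omega})^{\,j}\tilde F_0(e^{\iota\omega})$, and by Parseval convergence in $\ell_2$ is equivalent to $\int_{-\pi}^{\pi}\|M(e^{\iota\omega})^{\,j}\tilde F_0(e^{\iota\omega})\|^2\,d\omega\to0$.

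For sufficiency, I would exploit that $M\in\mathcal{RL}_\infty$ and hence $\omega\mapsto M(e^{\iota\omega})$ is continuous and, together with its eigenvalues, achieves its maximum on the compact interval $[0,\pi]$. Thus \eqref{MIMO_ILC:eq:conv_cond} gives $\rho^\star:=\max_{\omega}\rho(M(e^{\iota\omega}))<1$. A continuous Schur/Jordan reasoning at each frequency, combined with continuity in $\omega$, produces a uniform bound of the form $\|M(e^{\iota\omega})^{\,j}\|\le C\,j^{p}(\rho^\star+\varepsilon)^{j}$ for any small $\varepsilon>0$ with $\rho^\star+\varepsilon<1$. Multiplying by $\|\tilde F_0(e^{\iota\omega})\|^2$, integrating, and invoking dominated convergence (the dominating function being $\|\tilde F_0\|^2$ up to a geometric factor) yields $\|\tilde f_j\|_2\to 0$, i.e., $f_j\to f_\infty$ in $\ell_2$. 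Convergence of $e_j$ to $e_\infty$ then follows directly from \eqref{MIMO_ILC:eq:ILC_error}, since $J\in\mathcal{RH}_\infty^{n_y\times n_u}$ is a bounded operator on $\ell_2$; the same conclusion can be read off \eqref{MIMO_ILC:eq:ILC_error_propagation} using $J_l^{-1}$ whenever it exists.

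For necessity I would argue by contraposition. Suppose $\rho(M(e^{\iota\omega_0}))\ge1$ at some $\omega_0\in[0,\pi]$, and let $v_0$ be a unit eigenvector of $M(e^{\iota\omega_0})$ corresponding to an eigenvalue $\lambda_0$ with $|\lambda_0|\ge1$. Using continuity of the eigenstructure, pick a narrow frequency window around $\omega_0$ and construct $\tilde f_0\in\ell_2$ whose DTFT is supported in that window and approximately aligned with $v_0$. Then $\|M^{\,j}\tilde f_0\|_2$ stays bounded away from zero (it grows or oscillates), so by superposition one can choose $r=0$ and $f_0=f_\infty+\tilde f_0\in\ell_2$ to violate \defref{MIMO_ILC:def:convergence}, contradicting assumed convergence.

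The main obstacle is the uniform-in-$\omega$ decay of $M(e^{\iota\omega})^{\,j}$ needed in the sufficiency part: pointwise spectral radius strictly below $1$ does not automatically give a uniform operator-norm bound, since the Gelfand constant generally depends on the Jordan structure, which varies with $\omega$. Resolving this cleanly either requires a continuous Schur triangularization argument on the compact set $[0,\pi]$, or invoking the fact that the spectral radius of the convolution operator $M$ on $\ell_2$ equals $\sup_{\omega}\rho(M(e^{\iota\omega}))$ so that $\rho^\star<1$ is equivalent to contractivity of some equivalent norm. The remaining steps reduce to bookkeeping with the DTFT and Parseval's identity.
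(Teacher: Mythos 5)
The paper itself gives no proof of this theorem --- it defers to [Norrl\"of \& Gunnarsson 2002, Thm.~6] --- and your frequency-domain route is essentially the standard one used there. Your sufficiency half is sound, and the obstacle you flag (pointwise spectral radius below one does not yield a uniform Gelfand constant over $\omega$) is real; your second proposed fix resolves it cleanly. After the DTFT, iteration \eqref{MIMO_ILC:eq:ILC_feedforward_propagation} is driven by the multiplication operator $T_M$ with continuous symbol $M(e^{\iota\omega})=Q(I-LJ)$, whose spectrum is $\bigcup_{\omega}\sigma\left(M(e^{\iota\omega})\right)$; hence $\rho(T_M)=\max_{\omega}\rho\left(M(e^{\iota\omega})\right)=\rho^\star<1$ by compactness and continuity, and Gelfand's formula gives $\sup_{\omega}\|M(e^{\iota\omega})^{\,j}\|\le C(\rho^\star+\varepsilon)^{j}$ directly, so the contraction estimate is uniform and dominated convergence is not even needed.

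The necessity half has a genuine gap. Your contraposition builds $\tilde f_0$ with DTFT supported in a narrow band around $\omega_0$ where $\rho\left(M(e^{\iota\omega_0})\right)\ge1$ and asserts that $\|M^{\,j}\tilde f_0\|_{\ell_2}$ stays bounded away from zero. This fails in the boundary case where $\rho\left(M(e^{\iota\omega})\right)<1$ for all $\omega\ne\omega_0$ and equals $1$ exactly at $\omega_0$: the point $\omega_0$ has Lebesgue measure zero, an $\ell_2$ signal cannot concentrate spectral mass there, and on the rest of any window $\|M(e^{\iota\omega})^{\,j}\tilde F_0(e^{\iota\omega})\|\to0$ pointwise with integrable dominating function, so $\|M^{\,j}\tilde f_0\|_{\ell_2}\to0$ after all. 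A scalar counterexample to your mechanism: $m$ continuous with $|m(\omega)|<1$ except $m(\omega_0)=-1$; then $1-m$ is boundedly invertible, $f_\infty\in\ell_2$ exists, and the iteration converges for every $r,f_0\in\ell_2$ even though \eqref{MIMO_ILC:eq:conv_cond} is violated at $\omega_0$. So the homogeneous response cannot carry the ``only if'' at isolated boundary frequencies; one must instead exploit the quantifier over $r$ and show that the forced fixed point $(I-M)^{-1}QLSr$ leaves $\ell_2$ for suitable $r$, which itself only works when $1$ (not merely some unimodular eigenvalue) lies in $\sigma\left(M(e^{\iota\omega_0})\right)$. Your argument is fine when the condition fails strictly, i.e.\ $\rho>1$ on a set of positive measure, but as written the necessity direction does not go through in full --- a known delicacy of infinite-horizon frequency-domain ILC convergence statements that the paper sidesteps by citation.
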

See \cite[Theorem 6]{Norrlof2002} for a proof, which can be appropriately extended for non-causal $L,Q\in\mathcal{RL}_\infty$.
Although \eqref{MIMO_ILC:eq:conv_cond} guarantees convergence, it does not guarantee good learning transients.
Monotonic convergence is considered next.
\begin{definition}\label{MIMO_ILC:def:mon_convergence}
	Iteration \eqref{MIMO_ILC:eq:ILC_feedforward_propagation} converges monotonically w.r.t. the $\ell_2$ norm of $f_j$ to $f_\infty$ with convergence rate $\gamma$, $0\leq\gamma<1$, iff
	\begin{align}
	\|f_\infty-f_{j+1}\|_{\ell_2} \leq \gamma\|f_\infty-f_j\|_{\ell_2}\quad\forall j.
	\end{align}
\end{definition}
\begin{theorem}\label{MIMO_ILC:thm:mon_conv_ILC}
	Iteration \eqref{MIMO_ILC:eq:ILC_feedforward_propagation} converges monotonically w.r.t. the $\ell_2$ norm of $f_j$ to fixed point $f_\infty$, with convergence rate $\gamma$, iff
	\begin{equation}\label{MIMO_ILC:eq:mon_conv_cond}
	\gamma := \left\|Q(I-LJ)\right\|_\infty < 1,
	\end{equation}
	where
	$\|H(z)\|_\infty=\sup_{\omega\in[0,\pi]}\bar{\sigma}(H(e^{\iota\omega}))$ is the $\mathcal{L}_\infty$-norm, and $\bar{\sigma}$ denotes the maximum singular value. 
\end{theorem}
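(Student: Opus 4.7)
The plan is to reduce the iteration \eqref{MIMO_ILC:eq:ILC_feedforward_propagation} to a linear recursion on the error with respect to the fixed point, and then to identify the monotone contraction property with the induced $\ell_2$--gain of the operator $Q(I-LJ)$.

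First, I would observe that by \defref{MIMO_ILC:def:convergence} and \eqref{MIMO_ILC:eq:fixed_point_ff} the fixed point $f_\infty$ satisfies $f_\infty = Q(I-LJ)f_\infty + QLSr$. Subtracting this from \eqref{MIMO_ILC:eq:ILC_feedforward_propagation} eliminates the forcing term and yields the autonomous error recursion
\begin{equation}
\tilde f_{j+1} = Q(I-LJ)\,\tilde f_j,\qquad \tilde f_j := f_j - f_\infty.
\end{equation}
Since $f_0\in\ell_2$ is arbitrary and $f_\infty\in\ell_2$, the signal $\tilde f_0 = f_0 - f_\infty$ ranges over all of $\ell_2$ as $f_0$ does (for any fixed $r$). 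Consequently, the monotonicity requirement in \defref{MIMO_ILC:def:mon_convergence} is exactly the statement that the LTI operator $Q(I-LJ)$ has induced $\ell_2\to\ell_2$ gain at most $\gamma$.

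Next, I would invoke the standard identity linking the $\mathcal{L}_\infty$ norm of a stable transfer function to its induced $\ell_2$ gain: since $L,Q\in\mathcal{RL}_\infty$ and $J\in\mathcal{RH}_\infty$, we have $Q(I-LJ)\in\mathcal{RL}_\infty$, and by Parseval's theorem its induced $\ell_2$ gain coincides with $\sup_{\omega\in[0,\pi]}\bar\sigma(Q(e^{\iota\omega})(I-L(e^{\iota\omega})J(e^{\iota\omega})))=\|Q(I-LJ)\|_\infty$. This identity delivers both directions at once: sufficiency from the inequality $\|\tilde f_{j+1}\|_{\ell_2}\le\|Q(I-LJ)\|_\infty\|\tilde f_j\|_{\ell_2}$, and necessity from the tightness of the induced norm, i.e.\ for any $\gamma'<\|Q(I-LJ)\|_\infty$ one can select $\tilde f_0\in\ell_2$ (concentrated in frequency near a maximizer of $\bar\sigma$ and aligned with the corresponding right singular vector) such that $\|\tilde f_1\|_{\ell_2}>\gamma'\|\tilde f_0\|_{\ell_2}$, violating monotonicity with rate $\gamma'$.

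I expect the only non-trivial step to be the careful justification that the induced $\ell_2$ gain equals $\|\cdot\|_\infty$ even though $L,Q$ are allowed to be non-causal in $\mathcal{RL}_\infty$; this is resolved by noting that Parseval's identity on $\ell_2(\mathbb{Z})$ applies verbatim to two-sided (non-causal) stable LTI operators, so the supremum over $\omega\in[0,\pi]$ of $\bar\sigma(\cdot)$ is still attained as a tight bound on the $\ell_2$ gain. With this in hand, the equivalence of \eqref{MIMO_ILC:eq:mon_conv_cond} with monotone convergence of \eqref{MIMO_ILC:eq:ILC_feedforward_propagation} follows immediately, and the claim that $\gamma<1$ corresponds exactly to the contractive case concludes the proof.
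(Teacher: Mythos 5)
Your proposal is correct and follows essentially the same route as the proof the paper defers to by citation (\cite{Oomen2017Mech}): subtract the fixed-point relation to obtain the autonomous recursion $\tilde f_{j+1}=Q(I-LJ)\tilde f_j$ with $\tilde f_0$ ranging over all of $\ell_2$, and identify the tightest monotone contraction rate with the induced $\ell_2$-gain, which for (possibly non-causal) $\mathcal{RL}_\infty$ operators equals the $\mathcal{L}_\infty$-norm by Parseval. Your explicit remark on why the gain identity survives non-causality is exactly the point that needs care here, and you handle it correctly.
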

See \cite[Theorem 2]{Oomen2017Mech} for a proof.
Note \eqref{MIMO_ILC:eq:mon_conv_cond} is equivalent to
\begin{equation}\label{MIMO_ILC:eq:mon_conv_cond_sigma}
\bar{\sigma}\left(Q(e^{\iota\omega})(I-L(e^{\iota\omega})J(e^{\iota\omega}))\right) < 1 \quad \forall\omega\in[0,\pi].
\end{equation}
In view of \eqref{MIMO_ILC:eq:fixed_point_error}, the following result is crucial for performance.
\begin{theorem}\label{MIMO_ILC:thm:mon_conv_ILC_zero_error}
	Assume $L(e^{\iota\omega}),J(e^{\iota\omega})\neq0$, $\forall \omega$, in iteration \eqref{MIMO_ILC:eq:ILC_feedforward_propagation}.
	Given that \eqref{MIMO_ILC:eq:conv_cond} holds, then for all $r\in\ell_2$, $e_\infty=0$ iff $Q=I$.
\end{theorem}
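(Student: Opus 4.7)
The plan is to work from the closed-form expression \eqref{MIMO_ILC:eq:fixed_point_error} for $e_\infty$ and show that the iff statement reduces to a pointwise-in-frequency algebraic identity. The convergence hypothesis \eqref{MIMO_ILC:eq:conv_cond} guarantees that $\rho(Q(I-LJ))<1$ at every frequency, so $(I-Q(I-LJ))^{-1}$ is well-defined and $e_\infty$ is given unambiguously by \eqref{MIMO_ILC:eq:fixed_point_error}. Since $S=(I+GC)^{-1}$ is invertible at each frequency, $Sr$ ranges over all of $\ell_2^{n_y}$ as $r$ does, so the requirement ``$e_\infty=0$ for all $r\in\ell_2$'' is equivalent to the pointwise identity
\begin{equation}\label{eq:planidentity}
J(e^{\iota\omega})\bigl(I-Q(e^{\iota\omega})(I-L(e^{\iota\omega})J(e^{\iota\omega}))\bigr)^{-1}Q(e^{\iota\omega})L(e^{\iota\omega})=I\quad\forall\omega.
\end{equation}
The non-degeneracy hypothesis $L,J\neq 0$ at every frequency is read as pointwise invertibility of $L$ and $J$ (in the square case in which \eqref{eq:planidentity} can hold).

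For the $(\Leftarrow)$ direction I would substitute $Q=I$ into \eqref{eq:planidentity}: the factor $I-Q(I-LJ)$ collapses to $LJ$, which is invertible by hypothesis, and the left-hand side becomes $J(LJ)^{-1}L=JJ^{-1}L^{-1}L=I$, giving $e_\infty=0$.

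For the $(\Rightarrow)$ direction, starting from \eqref{eq:planidentity} I would left-multiply by $J^{-1}$ and right-multiply by $L^{-1}$ to obtain
\begin{equation}\label{eq:planstep}
(I-Q(I-LJ))^{-1}Q=(LJ)^{-1},
\end{equation}
then left-multiply by $(I-Q(I-LJ))$ and expand to get $Q=(LJ)^{-1}-Q(LJ)^{-1}+Q$, whence $(I-Q)(LJ)^{-1}=0$ and finally $Q=I$ after post-multiplying by $LJ$.

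The routine parts are the algebraic manipulations in the two directions. The only real obstacle is recognising and justifying that the statement implicitly requires the square case $n_u=n_y$ with pointwise-invertible $J$ and $L$ (if $n_y>n_u$, then $J(LJ)^{-1}L$ is a non-identity projection onto $\operatorname{range}J$ and $e_\infty$ cannot vanish for arbitrary $r$), and to argue that the non-degeneracy assumption $L,J\neq 0$ should be interpreted accordingly. Once this setting is fixed, the algebra above, together with the invertibility of $I-Q(I-LJ)$ supplied by \eqref{MIMO_ILC:eq:conv_cond}, closes both implications.
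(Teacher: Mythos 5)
Your proof is correct, but it takes a different route from the paper. The paper's own proof is a two-line argument directly on the fixed-point equations of the update law: combining $f_\infty=Q(f_\infty+Le_\infty)$ with $e_\infty=Sr-Jf_\infty$, it notes that $Q=I$ forces $Le_\infty=0$ and hence $e_\infty=0$, while $e_\infty=0$ forces $f_\infty=Qf_\infty$ for all $r$ and hence $Q=I$. You instead start from the closed-form expression \eqref{MIMO_ILC:eq:fixed_point_error}, reduce ``$e_\infty=0$ for all $r$'' to the pointwise identity $J(I-Q(I-LJ))^{-1}QL=I$ via invertibility of $S$, and resolve both directions by explicit matrix algebra; the computations are correct. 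What your version buys is transparency about the hypotheses: the paper's steps ``$Le_\infty=0\Rightarrow e_\infty=0$'' and ``$f_\infty=Qf_\infty\;\forall r\Rightarrow Q=I$'' silently require $L$ to be injective and $f_\infty$ to span, i.e., essentially the square pointwise-invertible setting you identify, whereas the literal hypothesis $L(e^{\iota\omega}),J(e^{\iota\omega})\neq0$ only excludes the zero matrix. Your remark that for $n_y>n_u$ the map $J(LJ)^{-1}L$ is a proper projection, so $e_\infty$ cannot vanish for all $r$, makes precise why the theorem is really a statement about the square case. The price is that your argument is longer and leans on inverting $LJ$ and the resolvent, which the paper's fixed-point manipulation avoids entirely.
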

\begin{proof}
	Since \eqref{MIMO_ILC:eq:conv_cond} holds, the fixed points $e_\infty$ and $f_\infty$ exist.\\
	$(\Leftarrow)$ If $Q=I$, then $f_\infty=f_\infty+Le_\infty$, which implies $e_\infty=0$.\\
	$(\Rightarrow)$ If $e_\infty=0$, then $f_\infty=Qf_\infty$, which implies $Q=I$.
\end{proof}

\subsection{Design and Modeling Considerations}\label{MIMO_ILC:subsec:design_modeling_considerations}
Theorems \ref{MIMO_ILC:thm:conv_ILC}, \ref{MIMO_ILC:thm:mon_conv_ILC} and \ref{MIMO_ILC:thm:mon_conv_ILC_zero_error} have direct consequences for the design of $L$ and $Q$. To achieve $e_\infty=0$, $L$ should be designed such that $\rho(I-LJ) < 1$, and $Q$ must equal $I$. For fast convergence, $\left\|Q(I-LJ)\right\|_\infty$ should be small.

\section{Multi-Loop SISO ILC Design}\label{MIMO_ILC:sec:multi-loop_SISO}
For systems where interaction is absent or sufficiently small, possibly after a decoupling process, multiple SISO ILC can be designed, see \circled{2} and \circled{3}.
In this section, it is shown that in the presence of interaction, multi-loop SISO designs may lead to non-convergent schemes, i.e., \ref{MIMO_ILC:R1} is not guaranteed. To account for ignored interaction, the ILC can be robustified \textit{a posteriori}, which is shown to lead to conservatism, hence compromising \ref{MIMO_ILC:R2}.
It is assumed that $J$ is square, i.e., $n\equiv n_u=n_y$, possibly after a squaring-down process, see, e.g., \cite{Zundert2018ACC}.

\subsection{Independent SISO ILC Design for MIMO Systems}\label{MIMO_ILC:subsec:SISO_design}
If no coupling is present, i.e., $J(z)=\mathrm{diag}\{J_{ii}(z)\}$, then multi-loop SISO filters $L$ and $Q$ can be designed by application of \procref{MIMO_ILC:proc_SISO_ILC} to each loop $i=1,\ldots,n$.
That is, design
\begin{align}
L = \mathrm{diag}\{l_1,l_2,\ldots,l_n \},\quad
Q = \mathrm{diag}\{q_1,q_2,\ldots,q_n \},\label{MIMO_ILC:eq:LQ_decentralized}
\end{align}
according to the set of SISO criteria
\begin{align}
\left| q_{i}(e^{\iota\omega})(1-l_{i}(e^{\iota\omega})J_{ii}(e^{\iota\omega})) \right| < 1 \quad\forall i, \omega\in[0,\pi].\label{MIMO_ILC:eq:convergence_decentralized}
\end{align}
Typically, each $l_i(z)$ is based on inversion of $\widehat{J_{ii}}(z)$, see, e.g., \cite{Moore1993,Wallen2008}, and \cite{Zundert2018Mech,Butterworth2012} for algorithms.
Often, $q_i(z)$ are zero-phase filters, and are implemented non-causally, i.e., an operation with $q_i$ and its adjoint $q_i^*$, see, e.g., \cite{BolderKleOomen2018,Strijbosch2018}.

However, when considerable interaction is present, independent SISO designs may lead to non-convergent systems, see also \figref{MIMO_ILC:fig:norm_error}.
Indeed, \eqref{MIMO_ILC:eq:convergence_decentralized} does not guarantee \thmref{MIMO_ILC:thm:conv_ILC}.

\subsection{Accounting for Ignored Interaction Through Robustness}\label{MIMO_ILC:subsec:Q_robustness_interaction}
Several approaches can be taken based on Theorems \ref{MIMO_ILC:thm:conv_ILC} and \ref{MIMO_ILC:thm:mon_conv_ILC} to enable SISO design of $Q$ for robust MIMO convergence.
Their restrictiveness is subject to a trade-off with the assumptions made on the structure of $Q$. Selecting $Q(z)=q_d(z)I$ with SISO filter $q_d(z)\in\mathcal{RL}_\infty$ leads to the next result.
\begin{corollary}\label{MIMO_ILC:cor:ILC_conv_qd}
	Assume $Q(z)=q_d(z)I\in\mathcal{RL}_\infty^{n\times n}$ with SISO filter $q_d(z)\in\mathcal{RL}_\infty$. The iteration \eqref{MIMO_ILC:eq:ILC_feedforward_propagation} converges iff
	\begin{equation}\label{MIMO_ILC:eq:conv_cond_qd}
	|q_d(e^{\iota\omega})|\rho\left(I-L(e^{\iota\omega})J(e^{\iota\omega})\right) < 1 \quad \forall\omega\in[0,\pi],
	\end{equation}
	and converges monotonically w.r.t. the $\ell_2$ norm of $f_j$ iff
	\begin{equation}\label{MIMO_ILC:eq:mon_conv_cond_qd}
	|q_d(e^{\iota\omega})|\bar{\sigma}\left(I-L(e^{\iota\omega})J(e^{\iota\omega})\right) < 1 \quad \forall\omega\in[0,\pi].
	\end{equation}
\end{corollary}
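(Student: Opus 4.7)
The plan is to specialize Theorem~\ref{MIMO_ILC:thm:conv_ILC} and Theorem~\ref{MIMO_ILC:thm:mon_conv_ILC} to the structured choice $Q(z)=q_d(z)I$ and exploit the fact that at each frequency $\omega\in[0,\pi]$, the quantity $q_d(e^{\iota\omega})$ is a scalar multiple of the identity. Specifically, I would substitute $Q(e^{\iota\omega}) = q_d(e^{\iota\omega}) I$ into the frequency-domain expression $Q(I-LJ)$ evaluated pointwise, obtaining $q_d(e^{\iota\omega})\,(I-L(e^{\iota\omega})J(e^{\iota\omega}))$, and then pull the scalar $q_d(e^{\iota\omega})$ out of the spectral radius or the maximum singular value.

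The core algebraic facts I would invoke are standard: for any $M\in\mathbb{C}^{n\times n}$ and $\alpha\in\mathbb{C}$, the eigenvalues of $\alpha M$ are $\alpha\lambda_i(M)$, so $\rho(\alpha M)=|\alpha|\rho(M)$; and the singular values of $\alpha M$ are $|\alpha|\sigma_i(M)$, so $\bar{\sigma}(\alpha M)=|\alpha|\bar{\sigma}(M)$. Applied pointwise in $\omega$ with $\alpha = q_d(e^{\iota\omega})$ and $M = I-L(e^{\iota\omega})J(e^{\iota\omega})$, this rewrites condition \eqref{MIMO_ILC:eq:conv_cond} as \eqref{MIMO_ILC:eq:conv_cond_qd} and, using the equivalent form \eqref{MIMO_ILC:eq:mon_conv_cond_sigma} of \eqref{MIMO_ILC:eq:mon_conv_cond}, rewrites monotonic convergence as \eqref{MIMO_ILC:eq:mon_conv_cond_qd}. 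Since the rewrites preserve equivalence at every frequency, both the ``if'' and ``only if'' directions transfer from Theorems~\ref{MIMO_ILC:thm:conv_ILC} and \ref{MIMO_ILC:thm:mon_conv_ILC} without modification.

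There is essentially no technical obstacle; the statement is a direct corollary whose only subtlety is a careful pointwise justification that the scalar factoring is valid for every $\omega\in[0,\pi]$, and that $q_d\in\mathcal{RL}_\infty$ guarantees $q_d(e^{\iota\omega})$ is well-defined on the unit circle so these manipulations make sense. Thus the proof would consist of one or two lines of algebra per condition followed by citing Theorem~\ref{MIMO_ILC:thm:conv_ILC} and Theorem~\ref{MIMO_ILC:thm:mon_conv_ILC} respectively.
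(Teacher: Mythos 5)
Your proposal is correct and is exactly the argument the paper intends: the corollary is stated without an explicit proof precisely because it follows from Theorems~\ref{MIMO_ILC:thm:conv_ILC} and \ref{MIMO_ILC:thm:mon_conv_ILC} by substituting $Q=q_dI$ and pulling the scalar out of $\rho(\cdot)$ and $\bar{\sigma}(\cdot)$ pointwise in $\omega$. No gaps.
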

\corref{MIMO_ILC:cor:ILC_conv_qd} enables SISO design of $q_d(z)$ that guarantees robust convergence of the MIMO system using $\hat{J}_\textrm{FRF}(e^{\iota\omega})$.
This leads to the following design algorithm, constituting step \circled{4}.
\begin{algorithm}\label{MIMO_ILC:algo_robust_multi-loop_ILC}
	{\normalsize Step \circled{4}: robust multi-loop SISO design}\vspace{-2mm}
	\par\noindent\hrulefill
	\begin{enumerate}
		[label=\alph*)]
		\item Obtain SISO parametric models $\widehat{J_{ii}}(z)$ of $J_{ii}(z)$, $i=1,\ldots,n$.
		\item Design multi-loop SISO learning filter $L(z)=\mathrm{diag}\{l_{i}(z)\}$ such that $1-l_{i}(e^{\iota\omega})\widehat{J_{ii}}(e^{\iota\omega}) \approx 0$,  $\forall i,\omega\in[0,\pi]$.
		\item Based on MIMO non-parametric model $\hat{J}_\textrm{FRF}(e^{\iota\omega})$ from step \circled{1} of \procref{MIMO_ILC:proc_MIMO_ILC}, design $Q(z)=q_d(z)I$ according to \corref{MIMO_ILC:cor:ILC_conv_qd} to guarantee robust stability of the MIMO algorithm.
	\end{enumerate}
\end{algorithm}

To conclude, convergence can be guaranteed. However, performance may be limited: \corref{MIMO_ILC:cor:ILC_conv_qd} is very restrictive on the structure of $Q$. This motivates the development of decentralized designs, where each loop is robustified individually.

\section{Decentralized ILC: Robustness to Interaction Through Independent \texorpdfstring{$Q$}{Q}-filter Designs}\label{MIMO_ILC:sec:decentralized}
For systems where interaction cannot be ignored in view of convergence, yet high performance is desired using only SISO parametric models, decentralized ILCs can be designed, see \circled{5}.
In this section, a decentralized design approach is developed that guarantees robust convergence of the MIMO system. The approach requires the same models as \algoref{MIMO_ILC:algo_robust_multi-loop_ILC}: user effort (\ref{MIMO_ILC:R3}) is only increased by more involved computations.

In this section, the focus is on decentralized design of $Q(z)\in\mathcal{RL}_\infty^{n_u\times n_u}$ for given $L(z)\in\mathcal{RL}_\infty^{n_u\times n_y}$, which can be diagonal or full MIMO, see steps \circled{5} and \circled{6}. 
Yet, the results are foreseen to be most often applied to square systems, since this also enables decentralized design of $L=\mathrm{diag}\{l_1,l_2,\ldots,l_{n}\}$.

\subsection{Factorization of Iteration Dynamics}\label{MIMO_ILC:subsec:dec_ILC_interaction}
To analyze the role of interaction, \eqref{MIMO_ILC:eq:ILC_feedforward_propagation} is factored as
\begin{align}
f_{j+1}=QM f_j + \tilde{w} &= QM_\mathrm{d}(I+E) f_j + \tilde{w} \label{MIMO_ILC:eq:ILC_f_propagation}, 
\end{align}
where $M=I-LJ$, $M_\mathrm{d}=\mathrm{diag}\{M_{ii}\}$ consists of the diagonal elements of $M$, $E=M_\mathrm{d}^{-1}(M-M_\mathrm{d})$ contains the normalized interaction in $M$, and $\tilde{w}=QLSr$, see \figref{MIMO_ILC:fig:block_scheme_ILC_decentralized}.
Note that $M_\mathrm{d}$ and $E$ are functions of $J$ and $L$, and $E=0$ if $J$ is diagonal, i.e., there is no interaction.
The interaction term $I+E$ can be used to analyze robust stability. The following result is the basis for forthcoming decentralized designs.
\begin{lemma}\label{MIMO_ILC:lem:factorization_rho_conditions}
	Iteration \eqref{MIMO_ILC:eq:ILC_feedforward_propagation} converges iff
	\begin{align}\label{MIMO_ILC:eq:ILC_conv_cond_rho}
	\rho\left(QM_\mathrm{d}(I+E)\right) < 1 \quad\forall \omega\in[0,\pi].
	\end{align}
	and converges monotonically w.r.t. the $\ell_2$ norm of $f_j$ iff
	\begin{align}\label{MIMO_ILC:eq:ILC_mon_conv_cond_rho}
	\rho\left(M_\mathrm{d}^HQ^HQM_\mathrm{d}(I+E)(I+E)^H\right) < 1 \quad\forall \omega\in[0,\pi],\hspace{5pt}
	\end{align}
	where the superscript $H$ denotes conjugate transpose.
\end{lemma}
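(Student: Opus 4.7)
The plan is to derive both equivalences by substituting the factorization $I-LJ=M_{\mathrm{d}}(I+E)$ into the convergence conditions already established in Theorems \ref{MIMO_ILC:thm:conv_ILC} and \ref{MIMO_ILC:thm:mon_conv_ILC}, and then manipulating the second one into the form stated in the lemma using two standard linear-algebra identities.

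First I would verify the factorization itself. Writing $M=I-LJ=M_{\mathrm{d}}+(M-M_{\mathrm{d}})$ and factoring out $M_{\mathrm{d}}$ gives $M=M_{\mathrm{d}}\bigl(I+M_{\mathrm{d}}^{-1}(M-M_{\mathrm{d}})\bigr)=M_{\mathrm{d}}(I+E)$, which is valid wherever $M_{\mathrm{d}}(e^{\iota\omega})$ is nonsingular; I would flag this invertibility assumption, inherited from the definition of $E$, and note that the diagonal entries $M_{ii}=1-l_iJ_{ii}$ are typically kept bounded away from zero by the SISO design in Procedure~\ref{MIMO_ILC:proc_SISO_ILC}. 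Having this, the first claim \eqref{MIMO_ILC:eq:ILC_conv_cond_rho} is immediate: Theorem~\ref{MIMO_ILC:thm:conv_ILC} states that \eqref{MIMO_ILC:eq:ILC_feedforward_propagation} converges iff $\rho\bigl(Q(I-LJ)\bigr)<1$ on $[0,\pi]$, and substitution of the factorization yields exactly $\rho\bigl(QM_{\mathrm{d}}(I+E)\bigr)<1$.

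For the monotonic convergence statement \eqref{MIMO_ILC:eq:ILC_mon_conv_cond_rho}, I would start from Theorem~\ref{MIMO_ILC:thm:mon_conv_ILC}, which gives the pointwise condition $\bar{\sigma}\bigl(QM_{\mathrm{d}}(I+E)\bigr)<1$. Using the identity $\bar{\sigma}(A)^2=\rho(A^H A)$, this is equivalent to
\begin{equation}
\rho\!\left((I+E)^H M_{\mathrm{d}}^H Q^H Q M_{\mathrm{d}} (I+E)\right)<1.
\end{equation}
Then I would apply the commutation identity $\rho(XY)=\rho(YX)$ with $X=(I+E)^H$ and $Y=M_{\mathrm{d}}^H Q^H Q M_{\mathrm{d}}(I+E)$ to cycle the factor $(I+E)^H$ to the right, producing precisely the stated form $\rho\bigl(M_{\mathrm{d}}^H Q^H Q M_{\mathrm{d}}(I+E)(I+E)^H\bigr)<1$. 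Both implications then follow because each manipulation is an equivalence.

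The main obstacle will not be any deep calculation but rather bookkeeping: making sure that the factorization of $M$ and the cyclic property of the spectral radius are applied pointwise in $\omega$, and that the invertibility of $M_{\mathrm{d}}$ needed to define $E$ is stated explicitly as a standing assumption of the lemma (it is implicit in the very definition $E=M_{\mathrm{d}}^{-1}(M-M_{\mathrm{d}})$). Once that is in place, the proof is essentially a one-line substitution for the convergence claim and a short chain of equivalences via $\bar{\sigma}^2=\rho(\cdot^H\cdot)$ and $\rho(XY)=\rho(YX)$ for the monotonic claim.
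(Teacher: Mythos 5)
Your proposal is correct and follows essentially the same route as the paper: substitute the factorization $M=M_{\mathrm{d}}(I+E)$ into the spectral-radius condition of Theorem~\ref{MIMO_ILC:thm:conv_ILC} for the first claim, and for the second convert $\bar{\sigma}(\cdot)$ to a spectral radius and use $\rho(XY)=\rho(YX)$ to reach \eqref{MIMO_ILC:eq:ILC_mon_conv_cond_rho} (the paper starts from $\rho(AA^H)$ rather than $\rho(A^HA)$, an immaterial difference). Your explicit flagging of the invertibility of $M_{\mathrm{d}}$ needed to define $E$ is a reasonable addition that the paper leaves implicit.
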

\begin{proof}
	\eqref{MIMO_ILC:eq:ILC_conv_cond_rho} follows directly from substituting \eqref{MIMO_ILC:eq:ILC_f_propagation} into \eqref{MIMO_ILC:eq:conv_cond}, and
	\eqref{MIMO_ILC:eq:ILC_mon_conv_cond_rho} follows from substituting \eqref{MIMO_ILC:eq:ILC_f_propagation} into \eqref{MIMO_ILC:eq:mon_conv_cond_sigma}, and rewriting:
	\begin{subequations}
		\begin{align}
		\bar{\sigma}(Q(I-LJ&)) = \sqrt{\rho(QM(QM)^H)} = \sqrt{\rho(QMM^HQ^H)} \\
		&= \sqrt{\rho(QM_\mathrm{d}(I+E)(I+E)^HM_\mathrm{d}^HQ^H)} \\
		&= \sqrt{\rho(M_\mathrm{d}^HQ^HQM_\mathrm{d}(I+E)(I+E)^H)},\label{MIMO_ILC:eq:sigma_basis}
		\end{align}
	\end{subequations}
	where it is used that $\{\lambda_i(AB)\}=\{\lambda_i(BA)\}$ for square $A,B$.
	Substituting \eqref{MIMO_ILC:eq:sigma_basis} in \eqref{MIMO_ILC:eq:mon_conv_cond_sigma} and squaring completes the proof.
\end{proof}
Two observations are made for forthcoming developments:
\begin{itemize}
	\item The matrix $QM_\mathrm{d}$ in \eqref{MIMO_ILC:eq:ILC_conv_cond_rho}, respectively $M_\mathrm{d}^HQ^HQM_\mathrm{d}$ in \eqref{MIMO_ILC:eq:ILC_mon_conv_cond_rho}, has diagonal structure and is right multiplied with interaction term $(I+E)$, respectively $(I+E)(I+E)^H$.
	\item Comparing with \eqref{MIMO_ILC:eq:mon_conv_cond_sigma}, condition \eqref{MIMO_ILC:eq:ILC_mon_conv_cond_rho} is based on a spectral radius $\rho(\cdot)$ instead of a maximum singular value $\bar{\sigma}(\cdot)$.
\end{itemize}
Together, the structured form and the use of $\rho(\cdot)$ allow for the development of robust decentralized design techniques.
\begin{figure}[tpb]
		\centering
		{\includegraphics[page=1,scale=0.85]{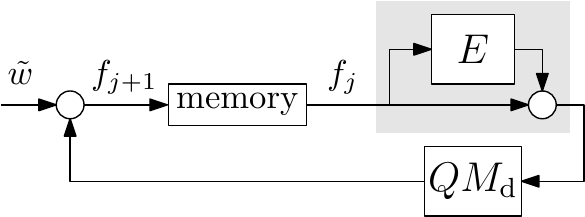}}
		\caption{Schematical representation of the factored iteration \eqref{MIMO_ILC:eq:ILC_f_propagation}. Since the interaction term $I+E$ (grey area) is invariant to $Q$, measures on $I+E$ can be developed to design decentralized filters $Q$ for robust convergence.}
		\label{MIMO_ILC:fig:block_scheme_ILC_decentralized}
\end{figure}

\begin{remark}\label{MIMO_ILC:rem:E_vs_IplusE}
	The factorization \eqref{MIMO_ILC:eq:ILC_f_propagation} resembles decentralized feedback control, see, e.g., \cite{Grosdidier1986} and \cite[Section 10.6]{Skogestad2007}, yet fundamentally differs regarding the use of $E$.
	
	In decentralized feedback design, i.e., $K=\mathrm{diag}\{k_i\}$ with open-loop transfer $GK$, the return difference is factored as
	\begin{equation}\label{MIMO_ILC:eq:FB_return_difference}
	I+GK=(I+E\tilde{T})(I+G_\mathrm{d}K),
	\end{equation}
	where $E=(G-G_\mathrm{d})G_\mathrm{d}^{-1}$, $G_\mathrm{d}=\mathrm{diag}\{G_{ii}\}$, and $\tilde{T}=(I+G_\mathrm{d}K)^{-1}G_\mathrm{d}K$. Assuming that $\tilde{T}$ is stable, the closed-loop $T=I-S$ is stable if $\rho(E\tilde{T})<1$, $\forall\omega\in[0,\pi]$, see \cite[Theorem 2]{Grosdidier1986}.
	Since $E$ appears linearly in $E\tilde{T}$, the magnitude of $E$ w.r.t. $\tilde{T}$ is typically used to analyze stability, see, e.g., \cite{Grosdidier1986}. 
	
	In decentralized ILC, \eqref{MIMO_ILC:eq:ILC_f_propagation} is affine in $E$, and bounds on $I+E$ w.r.t. $QM_\mathrm{d}$ are developed.
	The key difference is that \eqref{MIMO_ILC:eq:ILC_f_propagation} is factored, which is the open-loop in iteration domain, while in feedback the closed-loop return difference \eqref{MIMO_ILC:eq:FB_return_difference} is factored.
\end{remark}

\subsection{Decentralized Conditions for Robust Convergence} \label{MIMO_ILC:subsec:decentralized_robust_ILC_bounds}
Next, several decentralized design conditions are developed.
The conditions are less conservative than \corref{MIMO_ILC:cor:ILC_conv_qd} since the decentralized structure of $Q$ is explicitly taken into account. 

\subsubsection{Independent \texorpdfstring{$Q$}{Q}-filter Design Based on Induced Norms} \label{MIMO_ILC:subsec:dec_ILC_indep_Q_Gershgorin}
In this subsection, upper bounds on the spectral radii in \eqref{MIMO_ILC:eq:ILC_conv_cond_rho} and \eqref{MIMO_ILC:eq:ILC_mon_conv_cond_rho} based on induced norms are used for decentralized design of $Q$. In particular, for any matrix $A$, it holds $\rho(A)\leq \|A\|_{ip}$.
This relation is crucial for the presented designs.
\begin{theorem}\label{MIMO_ILC:thm:ILC_decentralized_Gershgorin}
	Iteration \eqref{MIMO_ILC:eq:ILC_feedforward_propagation} with decentralized filter $Q(z)=\mathrm{diag}\{q_{i}(z)\}$, as in \eqref{MIMO_ILC:eq:LQ_decentralized}, converges if either:
	\begin{align}
		\hspace{-6mm}|q_i(e^{\iota\omega})M_{ii}(e^{\iota\omega})|  &< \tfrac{1}{\vphantom{\tilde{A}}\sum_{j}|I + E(e^{\iota\omega})|_{ij}}\quad \forall i,\omega\in[0,\pi],\label{MIMO_ILC:eq:ILC_conv_iinfty_SISO} \\
		\hspace{-6mm}|q_i(e^{\iota\omega})M_{ii}(e^{\iota\omega})|  &< \tfrac{1}{\vphantom{\tilde{A}}\sum_{j}|I + E(e^{\iota\omega})|_{ji}}\quad \forall i,\omega\in[0,\pi],\hspace{0mm}\label{MIMO_ILC:eq:ILC_conv_i1_SISO}
	\end{align}
	and converges monotonically w.r.t. the $\ell_2$ norm of $f_j$ if
	\begin{align}
		|q_i(e^{\iota\omega})M_{ii}(e^{\iota\omega})|  < \tfrac{1}{\vphantom{\tilde{A}}\sqrt{\sum_{j}|(I + E(e^{\iota\omega}))(I + E(e^{\iota\omega}))^H|_{ij}}} \\ 
		\forall i,\omega\in[0,\pi],\hspace{10pt}\label{MIMO_ILC:eq:ILC_mon_conv_iinfty_SISO}
	\end{align}
	where $|\cdot|$ denotes element-wise absolute value.
\end{theorem}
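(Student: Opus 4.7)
The plan is to apply the elementary bound $\rho(A)\leq\|A\|_{ip}$ for any induced matrix norm $\|\cdot\|_{ip}$, together with the fact that the ``inner'' factor $QM_\mathrm{d}$ appearing in Lemma \ref{MIMO_ILC:lem:factorization_rho_conditions} is \emph{diagonal} when $Q=\mathrm{diag}\{q_i\}$. Specifically, I would start from the spectral-radius conditions \eqref{MIMO_ILC:eq:ILC_conv_cond_rho} and \eqref{MIMO_ILC:eq:ILC_mon_conv_cond_rho} and dominate them by the induced-$\infty$ (row-sum), induced-$1$ (column-sum), and induced-$2$ norms, respectively. The diagonal structure of $QM_\mathrm{d}$ is what decouples the resulting inequalities into independent SISO conditions on each $q_i$.

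For \eqref{MIMO_ILC:eq:ILC_conv_iinfty_SISO}, I would chain
\begin{equation}
\rho\bigl(QM_\mathrm{d}(I+E)\bigr)\leq \bigl\|QM_\mathrm{d}(I+E)\bigr\|_{i\infty}=\max_i\sum_j\bigl|(QM_\mathrm{d}(I+E))_{ij}\bigr|,
\end{equation}
and use that $(QM_\mathrm{d})_{ii}=q_iM_{ii}$ while all off-diagonal entries vanish, so that $(QM_\mathrm{d}(I+E))_{ij}=q_iM_{ii}(I+E)_{ij}$. This yields $\max_i |q_iM_{ii}|\sum_j|(I+E)_{ij}|<1$, which is exactly the pointwise-in-$\omega$ form of \eqref{MIMO_ILC:eq:ILC_conv_iinfty_SISO}. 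For \eqref{MIMO_ILC:eq:ILC_conv_i1_SISO} I would repeat the argument with the induced-$1$ norm $\|A\|_{i1}=\max_j\sum_i|A_{ij}|$, again exploiting the diagonal structure; here the absolute column sums of $I+E$ pick up the factor $|q_iM_{ii}|$ row-wise, giving the transposed index pattern.

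For the monotonic bound \eqref{MIMO_ILC:eq:ILC_mon_conv_iinfty_SISO}, the key observation is that $M_\mathrm{d}^HQ^HQM_\mathrm{d}=\mathrm{diag}\{|q_iM_{ii}|^2\}$ is again diagonal, so applying $\rho(\cdot)\leq\|\cdot\|_{i\infty}$ to \eqref{MIMO_ILC:eq:ILC_mon_conv_cond_rho} produces
\begin{equation}
\max_i|q_iM_{ii}|^2\sum_j\bigl|((I+E)(I+E)^H)_{ij}\bigr|<1,
\end{equation}
and taking a square root recovers \eqref{MIMO_ILC:eq:ILC_mon_conv_iinfty_SISO}. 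All three claims then follow by requiring the relevant SISO inequality to hold at every $\omega\in[0,\pi]$, which, via Lemma \ref{MIMO_ILC:lem:factorization_rho_conditions}, delivers (monotonic) convergence of the multivariable iteration.

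The argument is essentially algebraic; the only subtlety I anticipate is being careful that $|(I+E)(I+E)^H|$ in \eqref{MIMO_ILC:eq:ILC_mon_conv_iinfty_SISO} denotes \emph{elementwise} absolute value (as the statement specifies), rather than the matrix $(I+E)(I+E)^H$ itself, which is already Hermitian positive semidefinite. Keeping this distinction straight is what makes the bound $\rho\leq\|\cdot\|_{i\infty}$ valid after scalar factorization of the diagonal $M_\mathrm{d}^HQ^HQM_\mathrm{d}$. No deeper structural result is needed beyond Lemma \ref{MIMO_ILC:lem:factorization_rho_conditions} and the standard $\rho\leq\|\cdot\|_{ip}$ inequality.
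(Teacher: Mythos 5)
Your approach is the same as the paper's: bound the spectral radii of Lemma~\ref{MIMO_ILC:lem:factorization_rho_conditions} by induced $p$-norms ($\rho(A)\leq\|A\|_{ip}$) and use the diagonal structure of $QM_\mathrm{d}$, respectively $M_\mathrm{d}^HQ^HQM_\mathrm{d}$, to decouple the result into per-loop conditions; your derivations of \eqref{MIMO_ILC:eq:ILC_conv_iinfty_SISO} and \eqref{MIMO_ILC:eq:ILC_mon_conv_iinfty_SISO} are correct as written.

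One step does not go through as literally stated. For \eqref{MIMO_ILC:eq:ILC_conv_i1_SISO} you propose to ``repeat the argument'' with the induced-$1$ norm applied to $QM_\mathrm{d}(I+E)$, but the $j$-th column sum of that product is $\sum_i |q_iM_{ii}|\,|(I+E)_{ij}|$, which mixes the different gains $|q_iM_{ii}|$ within a single column and is \emph{not} controlled by \eqref{MIMO_ILC:eq:ILC_conv_i1_SISO}. Concretely, for $I+E=\left(\begin{smallmatrix}1 & 0.1\\ 10 & 1\end{smallmatrix}\right)$, condition \eqref{MIMO_ILC:eq:ILC_conv_i1_SISO} permits $|q_1M_{11}|=0.09$, $|q_2M_{22}|=0.9$, yet the first column sum of $QM_\mathrm{d}(I+E)$ is $0.09+9>1$, so $\|QM_\mathrm{d}(I+E)\|_{i1}>1$ even though $\rho=0.99<1$. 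The repair is one line and uses a fact already invoked in the proof of Lemma~\ref{MIMO_ILC:lem:factorization_rho_conditions}: since $\rho(AB)=\rho(BA)$, bound $\rho\left((I+E)QM_\mathrm{d}\right)$ instead, whose $j$-th column sum is exactly $|q_jM_{jj}|\sum_i|(I+E)_{ij}|$ and therefore decouples into \eqref{MIMO_ILC:eq:ILC_conv_i1_SISO} --- this is the ``transposed index pattern'' you correctly anticipate, but it requires commuting the diagonal factor to the right rather than merely switching norms. With that adjustment your proof is complete and coincides with the paper's (equally terse) argument, including the observation that Hermitian symmetry of $(I+E)(I+E)^H$ makes the $p=1$ and $p=\infty$ bounds coincide for the monotonic condition.
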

\begin{proof}
		Conditions \eqref{MIMO_ILC:eq:ILC_conv_iinfty_SISO}, \eqref{MIMO_ILC:eq:ILC_conv_i1_SISO} follow by application of $\rho(A)\leq \|A\|_{ip}$ to \eqref{MIMO_ILC:eq:ILC_conv_cond_rho} with $p=\infty$, $p=1$, respectively. Similarly, \eqref{MIMO_ILC:eq:ILC_mon_conv_iinfty_SISO} follows from \eqref{MIMO_ILC:eq:ILC_mon_conv_cond_rho}, for both $p=1$, $p=\infty$.	
\end{proof}

\begin{remark}
	\thmref{MIMO_ILC:thm:ILC_decentralized_Gershgorin} presents the ILC-analog of Gershgorin bounds in feedback control, see, e.g., \cite{Grosdidier1986}, \cite[Section 10.6]{Skogestad2007}.
\end{remark}

\subsubsection{Independent \texorpdfstring{$Q$}{Q}-filter Design Based on the SSV} \label{MIMO_ILC:subsec:dec_ILC_indep_Q_factorized}
Alternatively, conditions are developed using the structured singular value (SSV), see, e.g., \cite{Grosdidier1986,Zhou1996}. The idea is to exploit the diagonal structure of $QM_\mathrm{d}$ in \lemref{MIMO_ILC:lem:factorization_rho_conditions}.
Particularly, for a matrix $A$ and diagonal matrix $\Delta$, see \cite[eq. (8.95)]{Skogestad2007}, it holds 
\begin{equation}\label{MIMO_ILC:eq:rho_upper_bound}
\rho(\Delta A)\leq \bar{\sigma}(\Delta)\mu_\Delta(A),
\end{equation}
where $\mu_\Delta(A)$ is taken with respect to the structure of $\Delta$.
\begin{definition}
	For $A\in\mathbb{C}^{n\times n}$, the SSV $\mu_\Delta(A)$ is defined
	\begin{equation}
		\mu_\Delta(A)=\tfrac{1}{\vphantom{\tilde{A}}\min\{\bar{\sigma}(\Delta):\Delta\in\mathbf{\Delta},\det(I-A\Delta)=0\}},
	\end{equation}
	where $\mathbf{\Delta}$ is a prescribed set of block diagonal matrices, unless no $\Delta\in\mathbf{\Delta}$ makes $I-A\Delta$ singular, in which case $\mu_\Delta(A)=0$.
\end{definition}
\begin{theorem}\label{MIMO_ILC:thm:ILC_decentralized_SSV}
	Iteration \eqref{MIMO_ILC:eq:ILC_feedforward_propagation} with decentralized filter $Q(z)=\mathrm{diag}\{q_{i}(z)\}$, as in \eqref{MIMO_ILC:eq:LQ_decentralized}, converges if
	\begin{align}
		|q_i(e^{\iota\omega})M_{ii}(e^{\iota\omega})| < \tfrac{1}{\vphantom{\tilde{A}} \mu_\mathrm{d}(I+E(e^{\iota\omega}))} \quad\forall i, \omega\in[0,\pi] ,
		\label{MIMO_ILC:eq:ILC_conv_mu_SISO}
	\end{align}
	and converges monotonically w.r.t. the $\ell_2$ norm of $f_j$ if
	\begin{align}
		|q_i(e^{\iota\omega})M_{ii}(e^{\iota\omega})| < \tfrac{1}{\vphantom{\tilde{A}}\sqrt{\mu_\mathrm{d}((I+E(e^{\iota\omega}))(I+E(e^{\iota\omega}))^H)}} \\ \quad\forall i, \omega\in[0,\pi],\hspace{10pt}
		\label{MIMO_ILC:eq:ILC_mon_conv_mu_SISO}
	\end{align}
	where $\mu_\mathrm{d}(\cdot)$ is the SSV with respect to a diagonal structure.
\end{theorem}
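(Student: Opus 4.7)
The plan is to derive both bounds directly from Lemma~\ref{MIMO_ILC:lem:factorization_rho_conditions} by applying the inequality $\rho(\Delta A)\le\bar{\sigma}(\Delta)\mu_\Delta(A)$ from \eqref{MIMO_ILC:eq:rho_upper_bound}, exploiting the fact that in the factorization \eqref{MIMO_ILC:eq:ILC_f_propagation} the term $QM_\mathrm{d}$ (and hence $M_\mathrm{d}^HQ^HQM_\mathrm{d}$) is diagonal whenever $Q=\mathrm{diag}\{q_i\}$. This diagonality is exactly the structural hypothesis needed to invoke the SSV bound with respect to the set $\mathbf{\Delta}$ of diagonal matrices.

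For the convergence part, I would identify $\Delta=Q(e^{\iota\omega})M_\mathrm{d}(e^{\iota\omega})=\mathrm{diag}\{q_i M_{ii}\}$ and $A=I+E(e^{\iota\omega})$ in \eqref{MIMO_ILC:eq:ILC_conv_cond_rho}. Since $\Delta$ is diagonal, \eqref{MIMO_ILC:eq:rho_upper_bound} yields
\begin{equation}
\rho\bigl(QM_\mathrm{d}(I+E)\bigr)\le\bar{\sigma}(QM_\mathrm{d})\,\mu_\mathrm{d}(I+E)=\max_i|q_iM_{ii}|\,\mu_\mathrm{d}(I+E).
\end{equation}
Hence imposing \eqref{MIMO_ILC:eq:ILC_conv_mu_SISO} uniformly in $i$ forces this product to be strictly less than one at every frequency, so \eqref{MIMO_ILC:eq:ILC_conv_cond_rho} holds and convergence follows from Lemma~\ref{MIMO_ILC:lem:factorization_rho_conditions}.

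For the monotonic part, I would apply the same inequality to \eqref{MIMO_ILC:eq:ILC_mon_conv_cond_rho} with $\Delta=M_\mathrm{d}^HQ^HQM_\mathrm{d}=\mathrm{diag}\{|q_iM_{ii}|^2\}$ and $A=(I+E)(I+E)^H$, giving
\begin{equation}
\rho\bigl(M_\mathrm{d}^HQ^HQM_\mathrm{d}(I+E)(I+E)^H\bigr)\le\max_i|q_iM_{ii}|^2\,\mu_\mathrm{d}\bigl((I+E)(I+E)^H\bigr).
\end{equation}
Taking square roots, the decentralized bound \eqref{MIMO_ILC:eq:ILC_mon_conv_mu_SISO} ensures the right-hand side is below one, and monotonic convergence in the $\ell_2$ norm of $f_j$ follows from the second part of Lemma~\ref{MIMO_ILC:lem:factorization_rho_conditions}.

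The only subtle point, and the step I would be most careful about, is verifying that the chosen $\Delta$ truly lies in the set $\mathbf{\Delta}$ underlying $\mu_\mathrm{d}(\cdot)$: both $QM_\mathrm{d}$ and $M_\mathrm{d}^HQ^HQM_\mathrm{d}$ are complex diagonal (in fact real and nonnegative in the second case), so the diagonal-structure SSV is the appropriate one, and the bound \eqref{MIMO_ILC:eq:rho_upper_bound} applies as stated in \cite{Skogestad2007}. Beyond that the argument is a direct textual substitution; no further calculation is required.
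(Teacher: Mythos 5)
Your proposal is correct and follows essentially the same route as the paper: both apply the bound $\rho(\Delta A)\le\bar{\sigma}(\Delta)\mu_\Delta(A)$ from \eqref{MIMO_ILC:eq:rho_upper_bound} to the two spectral-radius conditions of Lemma~\ref{MIMO_ILC:lem:factorization_rho_conditions}, with $\Delta=QM_\mathrm{d}$ (resp. $M_\mathrm{d}^HQ^HQM_\mathrm{d}$) as the diagonal structured block and $I+E$ (resp. $(I+E)(I+E)^H$) as the nominal part, then use $\bar{\sigma}(QM_\mathrm{d})=\max_i|q_iM_{ii}|$ and a square root for the monotonic case. Your explicit check that $QM_\mathrm{d}$ and $M_\mathrm{d}^HQ^HQM_\mathrm{d}$ indeed lie in the diagonal set $\mathbf{\Delta}$ is exactly the structural observation the paper relies on.
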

\begin{proof}
	Using \eqref{MIMO_ILC:eq:rho_upper_bound} in \eqref{MIMO_ILC:eq:ILC_conv_cond_rho}, where $\Delta=QM_\mathrm{d}$, $A=I+E$ and $\mathbf{\Delta}=\{\delta I:\delta\in\mathbb{C}^n\}$, gives
	\begin{equation}\label{MIMO_ILC:eq:ILC_conv_factorized}
		\rho(Q(e^{\iota\omega})M_\mathrm{d}(e^{\iota\omega}))  < \tfrac{1}{\vphantom{\tilde{A}}\mu_\mathrm{d}(I+E(e^{\iota\omega}))}
		\quad\forall \omega\in[0,\pi].
	\end{equation}
	Omitting arguments, $\rho(QM_\mathrm{d})=\max_i |q_iM_{ii}|$ implies \eqref{MIMO_ILC:eq:ILC_conv_mu_SISO}.
	Applying \eqref{MIMO_ILC:eq:rho_upper_bound} to \eqref{MIMO_ILC:eq:ILC_mon_conv_cond_rho}, observing that $\bar{\sigma}(M_\mathrm{d}^HQ^HQM_\mathrm{d})=\max_i|q_iM_{ii}|^2$, and taking square roots proves \eqref{MIMO_ILC:eq:ILC_mon_conv_mu_SISO}.
\end{proof}
The SSV is employed in a fundamentally different way than in stability analyses of feedback systems. In robust control, e.g., \cite[Chapters 9, 11]{Zhou1996}, typically $\mu_\Delta(M)$ is taken with respect to structured uncertainty $\Delta$, and $M$ denotes a nominal model. In contrast, here $I+E$ has the role of nominal model, and $QM_\mathrm{d}$ is the structured uncertainty yet to be designed. 

\subsection{Decentralized \texorpdfstring{$Q$}{Q}-filter Design for Robustness to Interaction}\label{MIMO_ILC:subsec:decentralized_robust_ILC_design}
Theorems \ref{MIMO_ILC:thm:ILC_decentralized_Gershgorin} and  \ref{MIMO_ILC:thm:ILC_decentralized_SSV} enable systematic and robust (\ref{MIMO_ILC:R1}) decentralized design, using only SISO parametric models (\ref{MIMO_ILC:R3}). 
This is summarized as follows, constituting step \circled{5} of \procref{MIMO_ILC:proc_MIMO_ILC}.
\begin{algorithm}\label{MIMO_ILC:algo_decentralized_ILC}
	{\normalsize Step \circled{5}: robust decentralized MIMO design}\vspace{-2mm}
	\par\noindent\hrulefill
	\begin{enumerate}
		[label=\alph*)]
		\item Obtain SISO parametric models $\widehat{J_{ii}}(z)$ of $J_{ii}(z)$, $i=1,\ldots,n$.
		\item Design $L(z)=\mathrm{diag}\{l_{i}(z)\}$ such that $1-l_{i}\widehat{J_{ii}}\approx 0$,  $\forall i$.
		\item Construct $M_\mathrm{d}(e^{\iota\omega})$ and $E(e^{\iota\omega})$, see \eqref{MIMO_ILC:eq:ILC_f_propagation}, based on $L$ from b) and MIMO FRF model $\hat{J}_\textrm{FRF}(e^{\iota\omega})$ from step \circled{1} of \procref{MIMO_ILC:proc_MIMO_ILC}.
		\item For robust stability, design $Q(z)=\mathrm{diag}\{q_i\}$ according to joint evaluation of Theorems \ref{MIMO_ILC:thm:ILC_decentralized_Gershgorin}, \ref{MIMO_ILC:thm:ILC_decentralized_SSV}, i.e., for each frequency $\omega\in[0,\pi]$,
		\begin{itemize}
			\item at least one of \eqref{MIMO_ILC:eq:ILC_conv_iinfty_SISO}, \eqref{MIMO_ILC:eq:ILC_conv_i1_SISO}, \eqref{MIMO_ILC:eq:ILC_conv_mu_SISO} is satisfied (convergence);
			\item at least one of \eqref{MIMO_ILC:eq:ILC_mon_conv_iinfty_SISO}, \eqref{MIMO_ILC:eq:ILC_mon_conv_mu_SISO} is satisfied (monotonic convergence).
		\end{itemize}
	\end{enumerate}
\end{algorithm}

The key advantage of \algoref{MIMO_ILC:algo_decentralized_ILC}, compared to \algoref{MIMO_ILC:algo_robust_multi-loop_ILC}, is that performance (\ref{MIMO_ILC:R2}) can potentially be increased, while the modeling requirements (\ref{MIMO_ILC:R3}) remain equal.
Indeed, \eqref{MIMO_ILC:eq:ILC_conv_iinfty_SISO}-\eqref{MIMO_ILC:eq:ILC_mon_conv_iinfty_SISO} and \eqref{MIMO_ILC:eq:ILC_conv_mu_SISO}-\eqref{MIMO_ILC:eq:ILC_mon_conv_mu_SISO} can be computed using $\hat{J}_\textrm{FRF}(e^{\iota\omega})$, such that interaction does not have to be included in models $\widehat{J_{ii}}(z)$.

\begin{remark}\label{MIMO_ILC:rem:ILC_decentralized_joint_conditions}
	In \algoref{MIMO_ILC:algo_decentralized_ILC}, the developed bounds \eqref{MIMO_ILC:eq:ILC_conv_iinfty_SISO}, \eqref{MIMO_ILC:eq:ILC_conv_i1_SISO}, \eqref{MIMO_ILC:eq:ILC_conv_mu_SISO}, respectively \eqref{MIMO_ILC:eq:ILC_mon_conv_iinfty_SISO}, \eqref{MIMO_ILC:eq:ILC_mon_conv_mu_SISO}, are jointly considered. This is since the ordering of their tightness may vary as a function of frequency, and hence they all contribute to the design. Note however that, for a specific frequency $\omega\in[0,\pi]$, they can in general not be combined over the different SISO loops $i$. That is, convergence is guaranteed only if, per evaluated frequency, at least one condition is satisfied for all loops $i$ simultaneously.
\end{remark}
\begin{remark}
	In the SISO case, the results in Theorems \ref{MIMO_ILC:thm:ILC_decentralized_Gershgorin} and \ref{MIMO_ILC:thm:ILC_decentralized_SSV} recover the SISO condition \eqref{MIMO_ILC:eq:convergence_decentralized}, since in this case $E=0$.
\end{remark}

The achievable performance of decentralized ILC, i.e., the magnitude of $e_\infty$, is limited by interaction that is ignored in the design of $L$. If increased modeling effort is justified (\ref{MIMO_ILC:R3}) in relation to performance requirements (\ref{MIMO_ILC:R2}), MIMO parametric models of $J$ can be used to design centralized ILC algorithms.

\section{Centralized ILC: Accounting for Interaction Through \texorpdfstring{$L$}{L}-Filter Designs}\label{MIMO_ILC:sec:centralized}
For systems where decentralized ILC yields unsatisfactory performance due to required robustness to ignored interaction, and increased modeling effort (\ref{MIMO_ILC:R3}) is justified in view of performance requirements (\ref{MIMO_ILC:R2}), centralized ILC schemes can be designed using a full MIMO model, i.e., step \circled{6}.
By explicitly accounting for interaction in $L$, the requirement for robustness through $Q$ is alleviated, which potentially increases performance.
This leads to the following algorithm.
\begin{algorithm}\label{MIMO_ILC:algo_centralized_ILC}
	{\normalsize Step \circled{6}: centralized MIMO design}\vspace{-2mm}
	\par\noindent\hrulefill
	\begin{enumerate}
		[label=\alph*)]
		\item Obtain MIMO parametric model $\hat{J}(z)$, including interaction;
		\item Design $L(z)$ such that $I-L\hat{J} \approx 0$, e.g., \cite{Zundert2018Mech,Blanken2016CDC,Blanken2016Mech}.
		\item For robust stability, design $Q(z)$ according to \corref{MIMO_ILC:cor:ILC_conv_qd}, Theorem \ref{MIMO_ILC:thm:ILC_decentralized_Gershgorin} or \ref{MIMO_ILC:thm:ILC_decentralized_SSV} based on FRF model $\hat{J}_\textrm{FRF}(e^{\iota\omega})$ from step \circled{1}.
	\end{enumerate}
\end{algorithm}

In preceding sections, the techniques underlying steps \circled{4} to \circled{6} are developed. Next, these are applied to the case study.

\section{Application of Design Framework to Multivariable Case Study}
\label{MIMO_ILC:sec:case_study}
In this section, \procref{MIMO_ILC:proc_MIMO_ILC} is applied to a case study in a step-by-step manner.
Simulations are performed to clearly show the differences between the developed approaches.
Details on the case study and Matlab implementations of \procref{MIMO_ILC:proc_MIMO_ILC} are available as Supplementary Material.

\subsection{Case Study: Oc\'e Arizona 550GT Flatbed Printer}\label{MIMO_ILC:subsec:case_study}
\begin{figure}[tpb]
	\centering
	\mbox{\includegraphics[scale=0.8,page=2]{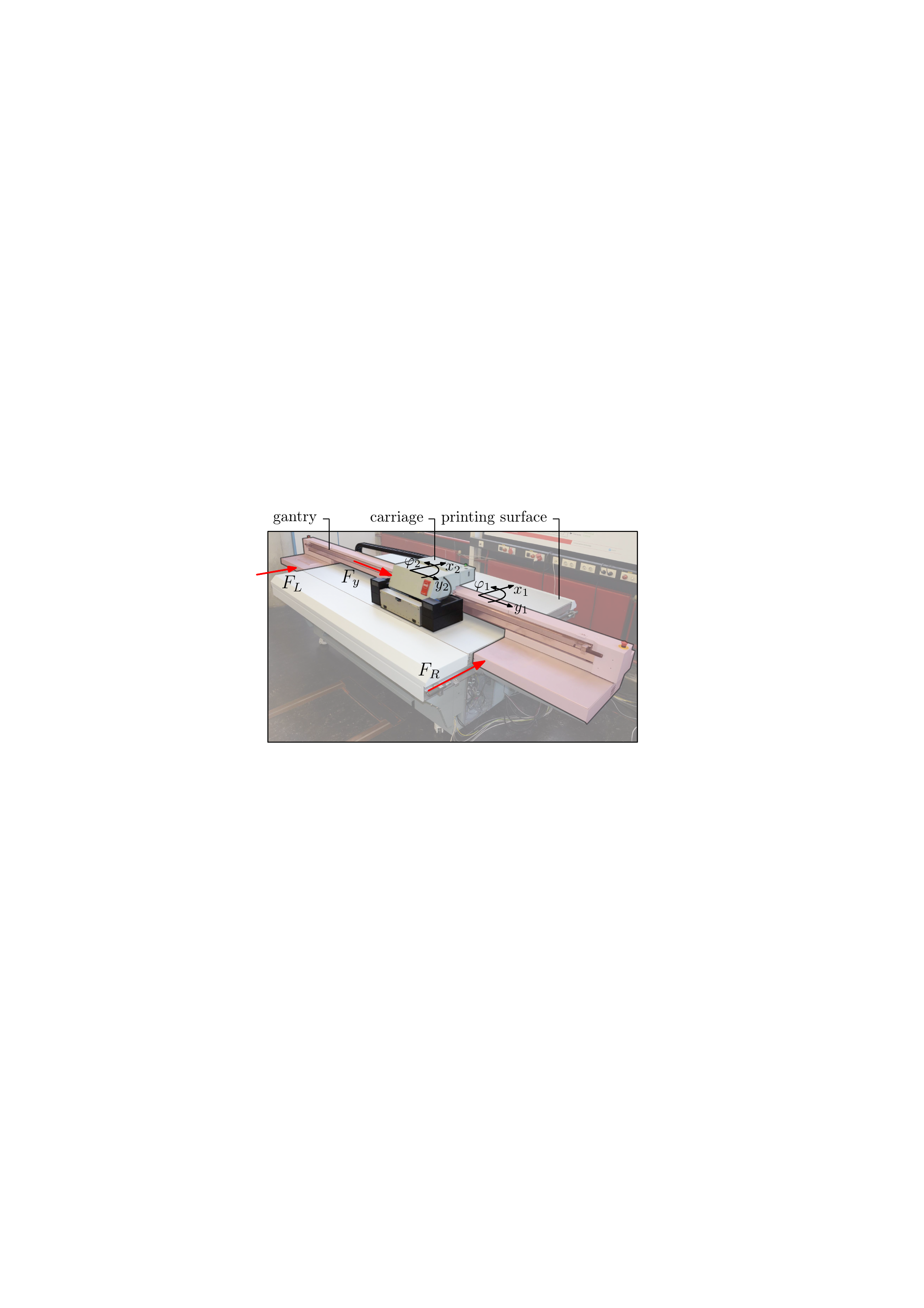}}
	\caption{Oc\'e Arizona 550GT flatbed printer. The carriage moves along the gantry, which provides the motion freedom to cover the printing surface. The actuator forces are indicated by red arrows. The inputs considered for control are $F_L$, $F_R$, and the outputs are $x_L$, $\varphi_2$, indicated by blue arrows.}
	\label{MIMO_ILC:fig:Arizona_CS}
\end{figure}

An Oc\'e Arizona 550GT printer is considered, see \figref{MIMO_ILC:fig:Arizona_CS}.
In contrast to standard consumer printers, the printer can print on both flexible and rigid media, e.g., paper, plastics, wood and metals. The medium is fixed on the printing surface, and the carriage, which contains the printheads, moves in the horizontal plane.
This yields inherently multivariable dynamics.

The simulations are performed using the model shown in \figref{MIMO_ILC:fig:Bode_system}.
The inputs are forces $F_L$ $[\mathrm{N}]$ and $F_R$ $[\mathrm{N}]$ acting on the gantry; the outputs are the gantry position at the left side $x_L$ $[\mathrm{m}]$, and carriage rotation $\varphi_2$ $[\mathrm{rad}]$, i.e., $\begin{bmatrix}x_L \\ \varphi_2\end{bmatrix}=G_o\begin{bmatrix}F_L \\ F_R\end{bmatrix}$, where $G_o$ is the system before decoupling in step \circled{3}.

\begin{table*}[!tb]
		\caption{Overview of approaches in design framework: modeling requirements, design parameters, and asymptotic performance.}
		\label{MIMO_ILC:tab:overview}
		\begin{center}
			\setlength\tabcolsep{4.5pt}			
			\setlength\extrarowheight{1pt}
			\begin{tabular}{c|c|c|c|c|c|c|x{8mm}|x{8mm}|c|c}
				& \multirow{3}{*}{$L$-filter} & \multirow{3}{1.4cm}{\centering required parametric models} & \multirow{3}{*}{$Q$-filter} & \multirow{3}{1.6cm}{\centering guaranteed robust stability (\ref{MIMO_ILC:R1})} & \multirow{3}{1.5cm}{\centering performance (\ref{MIMO_ILC:R2})} & \multirow{3}{1.2cm}{\centering user effort (\ref{MIMO_ILC:R3})} & \multicolumn{2}{c|}{cut-off $f_c~[\mathrm{Hz}]$} & \multirow{3}{1.1cm}{ $\|e_\infty\|_F$} & \\
				\cline{8-9} & & & & & & & \multirow{2}{*}{$q_1$} & \multirow{2}{*}{$q_2$} & & \\
				& & & & & & & & & & \\ \hline\hline
				Proc. \ref{MIMO_ILC:proc_SISO_ILC} & $n$ $\times$ SISO & $n$ $\times$ SISO& SISO, \S\ref{MIMO_ILC:subsec:SISO_design} & no & $-$ & $+$ & 100 & 15 &  N/A & \protect\tikz[baseline=-0.6ex,x=1pt,y=1pt]{\protect\draw[black,thick] [-] (0,-3) -- (6,3);\protect\draw[black,thick] [-] (0,3) -- (6,-3);} \\ \hline
				Proc. \ref{MIMO_ILC:proc_MIMO_ILC}: \circled{4} & $n$ $\times$ SISO & $n$ $\times$ SISO & Robust SISO, \S\ref{MIMO_ILC:subsec:Q_robustness_interaction} & yes & $-$ & $+$ & 13 & 13 & $0.45$ & \protect\tikz[baseline=-0.6ex,x=1pt,y=1pt]{\protect\draw[red,thick] (0,-3) rectangle ++(6,6);} \\ \hline
				Proc. \ref{MIMO_ILC:proc_MIMO_ILC}: \circled{5} & $n$ $\times$ SISO & $n$ $\times$ SISO & Decentralized, \S\ref{MIMO_ILC:sec:decentralized} & yes & $+$/$-$& $+$/$-$ & 19 & 9 & $0.30$ & \protect\tikz[baseline=-0.6ex,x=1pt,y=1pt]{\protect\draw[blue,thick] (7,0) circle (3);} \\ \hline
				Proc. \ref{MIMO_ILC:proc_MIMO_ILC}: \circled{6} & MIMO, \S\ref{MIMO_ILC:sec:centralized} & full MIMO & Robust SISO, \S\ref{MIMO_ILC:subsec:Q_robustness_interaction} & yes & $+$ & $-$& 28 & 28 &  $0.14$ & \protect\tikz[baseline=-0.6ex,x=1pt,y=1pt]{\protect\draw[green,thick] (0,0) +(3,0) -- +(0,3) -- +(-3,0) -- +(0,-3) -- cycle;}
			\end{tabular}
		\end{center}
\end{table*}

The system is discretized using zero-order-hold on the input with sampling interval $10^{-3}$ s.
A stabilizing diagonal feedback controller $C(z)=\mathrm{diag}\{c_1(z),c_2(z)\}$ is designed, where
\begin{equation}
c_1(z) = \frac{5\times10^4(z-0.988)}{z-0.939},~
c_2(z) = \frac{1.3\times10^4(z-0.991)}{z-0.969},
\end{equation}
yielding a bandwidth of $3$ $\mathrm{Hz}$ in $x_L$ direction and $1.5$ $\mathrm{Hz}$ in $\varphi_2$ direction.
The system $J(z)=(I+GC)^{-1}G$ has non-minimum phase transmission zeros at $z=1.09$ and $z=-6.69$ due to the non-collocated inputs/outputs and fast sampling. 
A model $\hat{J}(z)=(I+\hat{G}C)^{-1}\hat{G}$ is provided for ILC design, see \figref{MIMO_ILC:fig:Bode_system}.
A modeling error is present at the first resonance in the $(2,2)$-element, which plays a crucial role in the designs.

\begin{figure}[!tb]
	\centering
	\mbox{\includegraphics{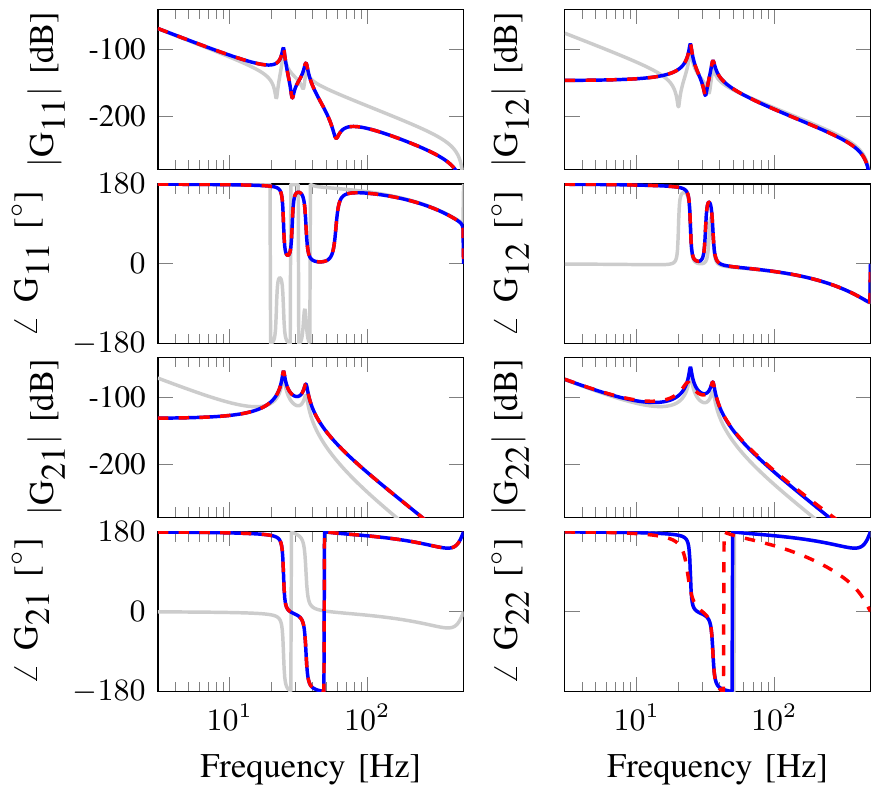}}
	\caption{Bode diagram of non-decoupled true plant $G_o(z)$ (\protect\tikz[baseline=-0.6ex,x=1pt,y=1pt]{ \protect\draw[gray,thick] [-] (0,0) -- (10,0);}), true plant $G$ (\protect\tikz[baseline=-0.6ex,x=1pt,y=1pt]{ \protect\draw[blue,thick] [-] (0,0) -- (10,0);}) after decoupling transformations in step \protect\circled{3} of \procref{MIMO_ILC:proc_MIMO_ILC}, and model of decoupled plant $\widehat{G}(z)$ (\protect\tikz[baseline=-0.6ex,x=1pt,y=1pt]{ \protect\draw[red,thick] [-] (0,0) -- (4,0);\protect\draw[red,thick] [-] (6,0) -- (10,0);}) used for ILC design.}
	\label{MIMO_ILC:fig:Bode_system}
\end{figure}

\subsection{Results: Application of \procref{MIMO_ILC:proc_MIMO_ILC} to Case Study}

\begin{figure}[!tb]
		\centering
		\mbox{\includegraphics{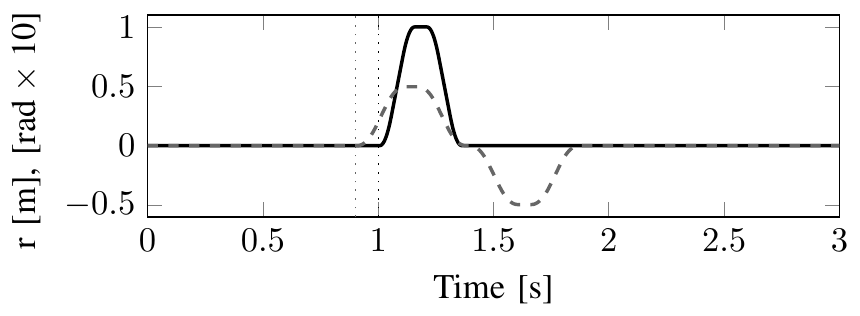}}
		\caption{Reference trajectories $r_x$ (\protect\tikz[baseline=-0.6ex,x=1pt,y=1pt]{ \protect\draw[black,thick] [-] (0,0) -- (10,0);}) and $r_\varphi$ (\protect\tikz[baseline=-0.6ex,x=1pt,y=1pt]{ \protect\draw[gray,thick] [-] (0,0) -- (4,0);\protect\draw[gray,thick] [-] (6,0) -- (10,0);}). The start of the motion tasks are indicated by dotted lines.}
		\label{MIMO_ILC:fig:references}
\end{figure}

Next, \procref{MIMO_ILC:proc_MIMO_ILC} is step-by-step applied to the case study, and the results are presented.
The disturbance $r=[r_x,r_\varphi]^\top$ of length $N=3001$ is shown in \figref{MIMO_ILC:fig:references}.
An overview of the designs is provided in \tabref{MIMO_ILC:tab:overview}, and the resulting performance $\|e_j\|_F$ is shown in \figref{MIMO_ILC:fig:norm_error}, where $\|e_j\|_F=\sqrt{\sum_{i,k}|e_j(i,k)|^2}$ with $e_j=[e_{j,x},e_{j,\varphi}]^\top\in\mathbb{R}^{N\times2}$. Note that $e_{x}$ [m] and $e_{\varphi}$ [rad] are weighed equally since they have comparable magnitude.
\begin{enumerate}[label=\protect\circled{\arabic*},leftmargin=0cm,itemindent=2\parindent]
	\item \underline{Non-parametric modeling}: it is assumed that the MIMO non-parametric FRF model is exact, i.e., $\hat{J}_\textrm{FRF}(e^{\iota\omega})=J(e^{\iota\omega})$.
	
	\item \underline{Interaction analysis}.
	From \figref{MIMO_ILC:fig:Bode_system}, it can be directly observed that there is substantial interaction above $20$ Hz.
	
	\item \underline{Decoupling transformations}.
	The input is transformed using static matrix $T_u$, see, e.g., \cite{Oomen2018}, such that plant $G=G_oT_u$ is diagonally dominant at low frequencies, see \figref{MIMO_ILC:fig:Bode_system}.
	
	\item \underline{Robust multi-loop SISO design}. The filters $l_{i}=1/\hat{J}_{ii}$ are implemented using stable inversion, see, e.g., \cite{Zundert2018Mech}.
	Filters $q_i$ are first-order zero-phase low-pass Butterworth filters.
	\begin{itemize}
		\item Independent SISO schemes (\protect\tikz[baseline=-0.6ex,x=1pt,y=1pt]{\protect\draw[black,thick] [-] (0,-3) -- (6,3);\protect\draw[black,thick] [-] (0,3) -- (6,-3);},\protect\tikz[baseline=-0.6ex,x=1pt,y=1pt]{ \protect\draw[black,thick] [-] (0,0) -- (10,0);}) are designed according to \eqref{MIMO_ILC:eq:convergence_decentralized} and \procref{MIMO_ILC:proc_SISO_ILC}, see \figref{MIMO_ILC:subfig:SISO_bounds}.
		Since interaction is ignored, convergence is not guaranteed (\ref{MIMO_ILC:R1}), see \figref{MIMO_ILC:subfig:SISO_convergence}. This is corroborated by \figref{MIMO_ILC:fig:norm_error}.
		\item Through robust SISO design (\protect\tikz[baseline=-0.6ex,x=1pt,y=1pt]{\protect\draw[red,thick] (0,-3) rectangle ++(6,6);},\protect\tikz[baseline=-0.6ex,x=1pt,y=1pt]{ \protect\draw[red,thick] [-] (0,0) -- (10,0);}) according to \algoref{MIMO_ILC:algo_robust_multi-loop_ILC}, convergence is guaranteed using MIMO FRF model $\hat{J}_\textrm{FRF}(e^{\iota\omega})$, see \figref{MIMO_ILC:subfig:SISO_convergence}.
		Note that $q_d$ is cut off at a low frequency due to required robustness in loop 2. This comes at the cost of performance (\ref{MIMO_ILC:R2}), also in loop 1, see \figref{MIMO_ILC:fig:error}.
	\end{itemize}
	
	\item \underline{Decentralized robust MIMO design} (\protect\tikz[baseline=-0.6ex,x=1pt,y=1pt]{\protect\draw[blue,thick] (7,0) circle (3);},\protect\tikz[baseline=-0.6ex,x=1pt,y=1pt]{ \protect\draw[blue,thick] [-] (0,0) -- (10,0);}) using \algoref{MIMO_ILC:algo_decentralized_ILC} further improves performance, see \figref{MIMO_ILC:fig:norm_error}.
	The same models are used: only decentralized filter $Q(z)=\mathrm{diag}\{q_i(z)\}$ is designed in a more sophisticated manner, see \figref{MIMO_ILC:subfig:DEC_convergence}.
	\begin{itemize}
		\item Compared to robust SISO design in \circled{4}, the cut-off frequency of $q_1$ is significantly higher, see \tabref{MIMO_ILC:tab:overview}.
		In loop 2, the modeling error is dominant beyond $10$ Hz, whereas in loop 1 robustness is required to interaction above $20$ Hz.
		
		\item The main improvement is achieved in loop 1, see \figref{MIMO_ILC:fig:error}, whereas the error in loop 2 is slightly increased.
		
		\item In view of d) of \algoref{MIMO_ILC:algo_decentralized_ILC}, note that each condition \eqref{MIMO_ILC:eq:ILC_conv_iinfty_SISO}, \eqref{MIMO_ILC:eq:ILC_conv_i1_SISO}, \eqref{MIMO_ILC:eq:ILC_conv_mu_SISO}, is violated at least once over the frequency range, see \figref{MIMO_ILC:subfig:DEC_convergence}. Hence, they all contribute to the design.
	\end{itemize}
	
	\item \underline{Centralized MIMO design} (\protect\tikz[baseline=-0.6ex,x=1pt,y=1pt]{\protect\draw[green,thick] (0,0) +(3,0) -- +(0,3) -- +(-3,0) -- +(0,-3) -- cycle;},\protect\tikz[baseline=-0.6ex,x=1pt,y=1pt]{ \protect\draw[green,thick] [-] (0,0) -- (10,0);}) using \algoref{MIMO_ILC:algo_centralized_ILC} yields the highest performance. 
	Given MIMO model $\hat{J}$, the MIMO filter $L=\hat{J}^{-1}$ is implemented using stable inversion.
	\begin{itemize}
		\item By designing for interaction in $L$, less robustness is required compared to \circled{4} and \circled{5}: the cut-off frequency of $Q=q_dI$, see \corref{MIMO_ILC:cor:ILC_conv_qd}, is significantly higher, see \figref{MIMO_ILC:subfig:CEN_stabinv_convergence}.
		
		\item The required MIMO model $\hat{J}$ can be expensive to obtain in practice: user effort (\ref{MIMO_ILC:R3}) is sacrificed for performance (\ref{MIMO_ILC:R2}).
		
		\item All approaches generate pre-actuation to compensate the non-minimum phase transmission zeros of $J$, see \figref{MIMO_ILC:fig:feedforward}.
	\end{itemize}
\end{enumerate}

\begin{figure}[!tb]
	\centering
	{\includegraphics{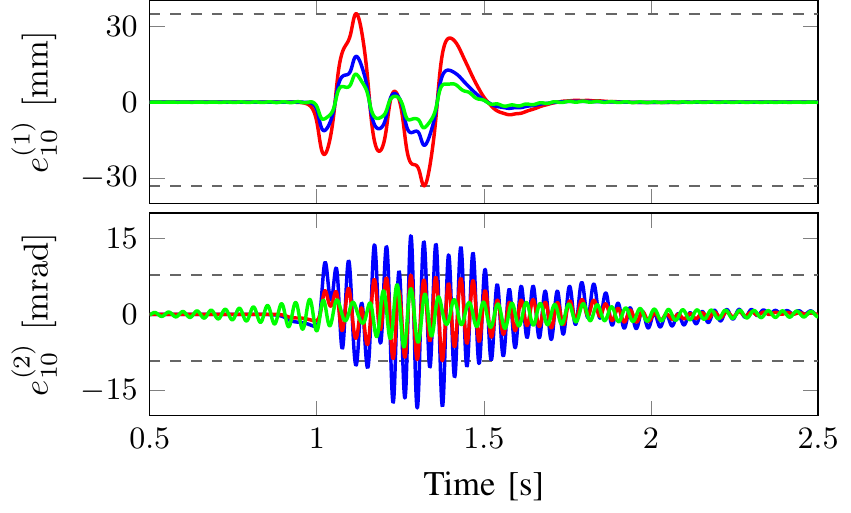}}
	\caption{Error signals in trial $j=10$ of robust SISO design in step \protect\circled{4} (\protect\tikz[baseline=-0.6ex,x=1pt,y=1pt]{ \protect\draw[red,thick] [-] (0,0) -- (10,0);}), decentralized design in step \protect\circled{5} (\protect\tikz[baseline=-0.6ex,x=1pt,y=1pt]{ \protect\draw[blue,thick] [-] (0,0) -- (10,0);}) and centralized design in step \protect\circled{6} (\protect\tikz[baseline=-0.6ex,x=1pt,y=1pt]{ \protect\draw[green,thick] [-] (0,0) -- (10,0);}).}
	\label{MIMO_ILC:fig:error}
\end{figure}
\begin{figure}[!tb]
	\centering
	{\includegraphics{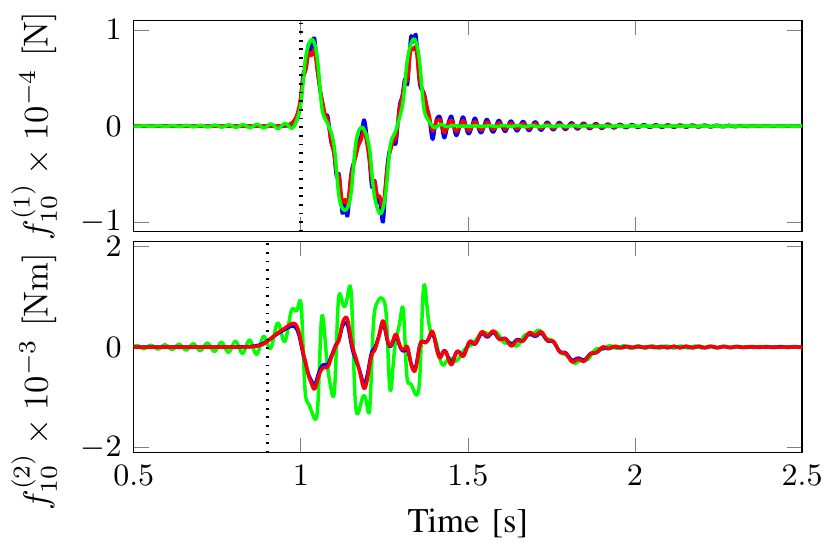}}
	\caption{Feedforward signals in trial $j=10$ of the robust SISO design in step \protect\circled{4} (\protect\tikz[baseline=-0.6ex,x=1pt,y=1pt]{ \protect\draw[red,thick] [-] (0,0) -- (10,0);}), decentralized design in step \protect\circled{5} (\protect\tikz[baseline=-0.6ex,x=1pt,y=1pt]{ \protect\draw[blue,thick] [-] (0,0) -- (10,0);}) and centralized design in step \protect\circled{6} (\protect\tikz[baseline=-0.6ex,x=1pt,y=1pt]{ \protect\draw[green,thick] [-] (0,0) -- (10,0);}). The start of the motion tasks are indicated by dotted lines.}
	\label{MIMO_ILC:fig:feedforward}
\end{figure}

The following key conclusions are made: i) interaction must be taken into account in the design, ii) performance can be improved with limited user effort through decentralized designs, and iii) if justified by performance requirements, performance can be further improved through centralized MIMO design.

\section{Conclusions}\label{MIMO_ILC:sec:conclusions}
The design framework developed in this paper enables systematic design of ILC controllers for multivariable systems, and balances performance requirements with modeling and design effort through a range of design solutions. This is done by judiciously combining non-parametric FRF measurements and parametric models.
The results are demonstrated on a flatbed printing system, including trade-offs between approaches.

\begin{figure}[tpb]
	\centering
	\subfigure[\vspace{-1pt}Interaction-ignoring SISO design using \eqref{MIMO_ILC:eq:convergence_decentralized}: filters $q_{i}$ ({\protect\tikz[baseline=-0.6ex,x=1pt,y=1pt]{ \protect\draw[black,thick] [-] (0,0) -- (3.3,0);\protect\draw[black,thick] [-] (4.7,0) -- (5.3,0);\protect\draw[black,thick] [-] (6.8,0) -- (10,0);}}) (top: $i=1$; bottom: $i=2$) are designed such that $|q_{i}(1-l_{i}\hat{J}_\textrm{FRF,ii})|<1,\forall \omega$ ({\protect\tikz[baseline=-0.6ex,x=1pt,y=1pt]{ \protect\draw[black,thick] [-] (0,0) -- (10,0);}}).\vspace{-2pt} \label{MIMO_ILC:subfig:SISO_bounds}]	{\mbox{\includegraphics[scale=1]{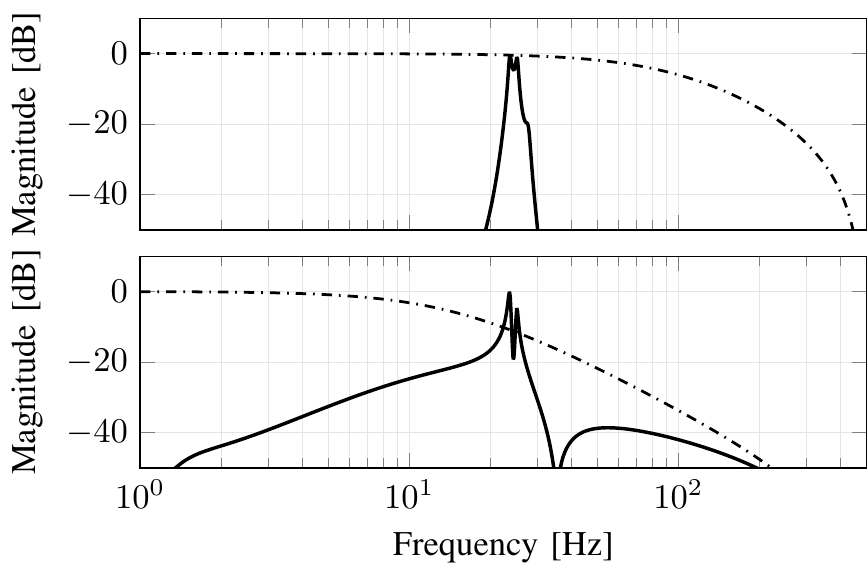}}}
	\\
	\subfigure[\vspace{-1pt}Step {\protect\circled{4}}: robust SISO design of $Q=q_dI$ ({\protect\tikz[baseline=-0.6ex,x=1pt,y=1pt]{ \protect\draw[red,thick] [-] (0,0) -- (3.3,0);\protect\draw[red,thick] [-] (4.7,0) -- (5.3,0);\protect\draw[red,thick] [-] (6.8,0) -- (10,0);}}) according to \algoref{MIMO_ILC:algo_robust_multi-loop_ILC} guarantees convergence, i.e., $\rho(Q(I-L\hat{J}_\textrm{FRF}))<1,\forall \omega$ ({\protect\tikz[baseline=-0.6ex,x=1pt,y=1pt]{ \protect\draw[red,thick] [-] (0,0) -- (10,0);}}), in contrast to interaction-ignoring SISO designs ({\protect\tikz[baseline=-0.6ex,x=1pt,y=1pt]{ \protect\draw[black,thick] [-] (0,0) -- (10,0);}}), see \figref{MIMO_ILC:subfig:SISO_bounds}.\vspace{-2pt} \label{MIMO_ILC:subfig:SISO_convergence}]
	{\mbox{\includegraphics[scale=1]{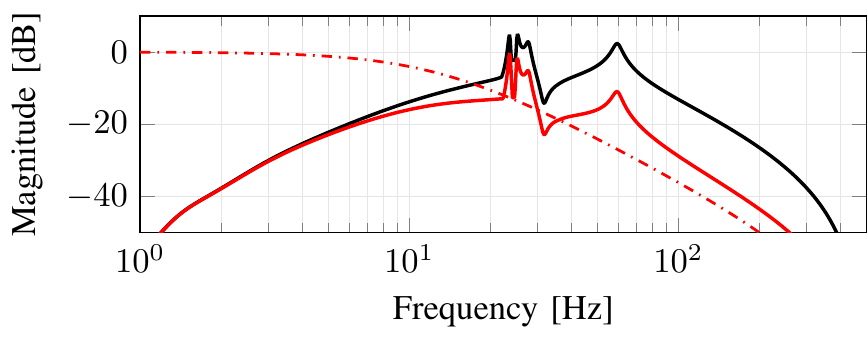}}}
	\\
	\subfigure[\vspace{-1pt}Step {\protect\circled{5}}: robust decentralized design according to \algoref{MIMO_ILC:algo_decentralized_ILC}: the filters $q_i$ ({\protect\tikz[baseline=-0.6ex,x=1pt,y=1pt]{ \protect\draw[blue,thick] [-] (0,0) -- (3.3,0);\protect\draw[blue,thick] [-] (4.7,0) -- (5.3,0);\protect\draw[blue,thick] [-] (6.8,0) -- (10,0);}}) (top: $i=1$; bottom: $i=2$) are chosen such that, for each frequency, $|q_i(1-l_i\hat{J}_\textrm{FRF,ii})|=|q_iM_{ii}|$ ({\protect\tikz[baseline=-0.6ex,x=1pt,y=1pt]{ \protect\draw[blue,thick] [-] (0,0) -- (10,0);}}) are upper bounded by at least one of the right-hand sides of \eqref{MIMO_ILC:eq:ILC_conv_iinfty_SISO}, \eqref{MIMO_ILC:eq:ILC_conv_i1_SISO} and \eqref{MIMO_ILC:eq:ILC_conv_mu_SISO} ({\protect\tikz[baseline=-0.6ex,x=1pt,y=1pt]{ \protect\draw[black,thick] [-] (0,0) -- (4,0);\protect\draw[black,thick] [-] (6,0) -- (10,0);}, \protect\tikz[baseline=-0.6ex,x=1pt,y=1pt]{ \protect\draw[green,thick] [-] (0,0) -- (4,0);\protect\draw[green,thick] [-] (6,0) -- (10,0);}, \protect\tikz[baseline=-0.6ex,x=1pt,y=1pt]{ \protect\draw[red,thick] [-] (0,0) -- (4,0);\protect\draw[red,thick] [-] (6,0) -- (10,0);}, respectively}).\vspace{-2pt} \label{MIMO_ILC:subfig:DEC_convergence}]	
	{\mbox{\includegraphics[scale=1]{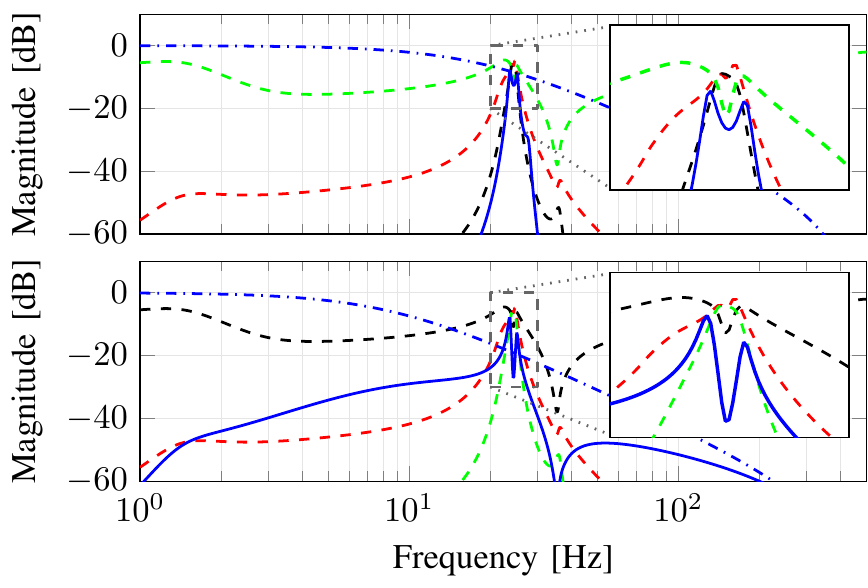}}}	
	\\
	\subfigure[\vspace{-1pt}Step {\protect\circled{6}}: centralized ILC design according to \algoref{MIMO_ILC:algo_centralized_ILC} and \eqref{MIMO_ILC:eq:conv_cond_qd}: the filter $Q=q_dI$ ({\protect\tikz[baseline=-0.6ex,x=1pt,y=1pt]{ \protect\draw[green,thick] [-] (0,0) -- (3.3,0);\protect\draw[green,thick] [-] (4.7,0) -- (5.3,0);\protect\draw[green,thick] [-] (6.8,0) -- (10,0);}}) is designed such that $|q_d|\rho(1-L\hat{J}_\textrm{FRF})<1,\forall \omega$ ({\protect\tikz[baseline=-0.6ex,x=1pt,y=1pt]{ \protect\draw[green,thick] [-] (0,0) -- (10,0);}}).\vspace{-12pt} \label{MIMO_ILC:subfig:CEN_stabinv_convergence}]	{\mbox{\includegraphics[scale=1]{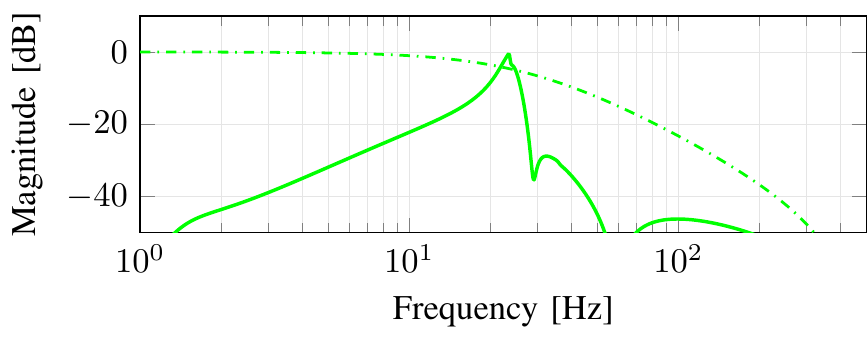}}}			
	\caption{Application of \procref{MIMO_ILC:proc_MIMO_ILC}: designs of robustness filters $Q(z)$.}
	\label{MIMO_ILC:fig:designs_Q}
\end{figure}


\section*{Acknowledgment}
The authors thank Sjirk Koekebakker, Maarten Steinbuch, Jeroen Willems and Jurgen van Zundert for their contributions.


\end{document}